\documentclass[10pt,journal,compsoc]{IEEEtran}

\usepackage[T1]{fontenc}
\usepackage[utf8]{inputenc}
\usepackage{amsmath,amssymb,amsthm}
\usepackage{enumitem}
\usepackage[expansion=false]{microtype}
\usepackage{graphicx}
\usepackage{xurl}
\usepackage[hidelinks]{hyperref}
\hypersetup{
	pdfauthor={Homayoun Maleki, Nekane Sainz, Jon Legarda, Igor Santos-Grueiro},
	pdftitle={Identity Amortization: Shared Resources and the Cost of Sybil Attacks},
	pdfsubject={Computer Science - Cryptography and Security},
	pdfkeywords={Identity amortization, Sybil resistance, cross-identity resource sharing, shared provisioning, influence allocation, Sybil attacks}
}

\theoremstyle{definition}
\newtheorem{definition}{Definition}
\theoremstyle{plain}
\newtheorem{proposition}{Proposition}
\newtheorem{lemma}{Lemma}
\newtheorem{corollary}{Corollary}
\theoremstyle{remark}

\newcommand{\M}{\mathcal{M}}
\newcommand{\Cshared}{C_{\mathrm{shared}}}
\newcommand{\Cind}{C_{\mathrm{ind}}}
\newcommand{\Cbshared}{\bar C_{\mathrm{shared}}}
\newcommand{\Cbind}{\bar C_{\mathrm{ind}}}
\newcommand{\Pishared}{\Pi_{\mathrm{shared}}}
\newcommand{\Piind}{\Pi_{\mathrm{ind}}}

\title{Identity Amortization: Shared Resources and the Cost of Sybil Attacks}

\author{Homayoun~Maleki,
	Nekane~Sainz,
	Jon~Legarda,
	and~Igor~Santos-Grueiro%
	\IEEEcompsocitemizethanks{%
		\IEEEcompsocthanksitem This work has been submitted to the IEEE for possible publication. Copyright may be transferred without notice, after which this version may no longer be accessible.
		\IEEEcompsocthanksitem H. Maleki, N. Sainz, and J. Legarda are with DeustoTech, Faculty of Engineering, University of Deusto, Bilbao, Spain.\protect\\
		E-mail: h.maleki@deusto.es, nekane.sainz@deusto.es, jlegarda@deusto.es
		\IEEEcompsocthanksitem I. Santos-Grueiro is with the International University of La Rioja (UNIR), Logro\~{n}o, Spain.
		\IEEEcompsocthanksitem Corresponding author: Homayoun Maleki.}}

\IEEEtitleabstractindextext{%
	\begin{abstract}
		Sybil defenses commonly price identities by requiring a resource for each one. When one resource can support several identities, an adversary can pay less per identity than the defense intends. We call this \emph{identity amortization}: a strict reduction in the minimum cost of provisioning identities when resources may be shared, relative to provisioning each one independently. It is a second route to the gap that a Sybil attack exploits. When influence is assigned per identity, a defense calibrated to the nominal price can overestimate the cost of a Sybil threshold by up to the number of identities one resource supports. For any resource catalogue, we give exact tests for when identity amortization occurs, where it begins, and its largest error; a sharing cap below the onset removes it. Because the tests need only prices and capacities, they apply to any defense that admits identities against a shareable resource, telling designers whether the assumed per-identity cost is paid, how wrong it can be, and which intervention restores it most cheaply for honest participants. On published phone-verification prices, per-account estimates overstate provisioning cost 1.6--4.3 times; Tor's raised per-address relay limit cut modeled attack cost by 8--72\%, depending on the address-to-relay price ratio.
	\end{abstract}
	
	\begin{IEEEkeywords}
		Identity amortization, Sybil resistance, cross-identity resource sharing, shared provisioning, influence allocation, Sybil attacks.
\end{IEEEkeywords}}

\begin{document}
	\maketitle
	\IEEEdisplaynontitleabstractindextext
	\IEEEpeerreviewmaketitle
	
	\section{Introduction}
	\label{sec:intro}
	
	\IEEEPARstart{D}{igital} systems frequently recognize multiple identities without directly establishing whether those identities correspond to independent entities~\cite{douceur2002sybil,levine2006survey}. Many Sybil defenses therefore make each identity costly by requiring a resource before admitting it: computational work~\cite{dwork1992pricing}, a verified phone number~\cite{thomas2014dialing}, a network address~\cite{torprop109}, or an attested device~\cite{costan2016intel}. Yet a single such resource can often serve several identities: one phone number can verify several accounts, one address can host several servers, and one device can be enrolled under more than one identity. The defense then charges less per identity than it intends. We call this \emph{identity amortization}: shared provisioning strictly reduces the minimum cost of supporting a set of identities relative to an explicitly defined independent-provisioning baseline. It is not tied to one system; it can arise wherever identities are admitted against a resource that can be shared.
	
	It is also observed in deployed defenses. In Google's phone-verified accounts, Thomas et al.~\cite{thomas2014dialing} found that phone numbers supporting five or more accounts held 58\% of abusive accounts, and public SMS gateways are used primarily to evade verification requirements~\cite{reaves2016sms}. Tor limits the number of relays per network address to raise the cost of running many relays~\cite{torprop109}, and in 2023 it raised that limit. Sections~\ref{sec:experiments} and~\ref{sec:tor} quantify what sharing does to the cost of an attack in each case.
	
	The concept separates three questions: how many distinct identities a system recognizes, what resources and costs are required to support them, and which of those costs must be incurred separately for each identity and which can be shared. These questions are related but not interchangeable. Multiple identities may share a resource without reducing every component of their cost, and a reduction in provisioning cost does not necessarily increase their aggregate influence.
	The cost structure itself is not new: economics calls it \emph{economies of scope}~\cite{panzar1981scope,teece1980scope}, the saving from producing several outputs with a shared input. What a security mechanism needs to know, and what economies of scope does not say, is whether the cost the mechanism intends to impose on each identity is in fact paid for each identity. The term echoes amortized analysis~\cite{tarjan1985amortized}, where a cost is spread over a sequence of operations in time; here it is spread across identities that exist together. From the system's view, a hundred accepted identities are a hundred participants. Under identity amortization they may rest on the resources of a single entity, so the resource requirement, which the system treats as evidence that identities are independent, can be satisfied without that independence. What is shared is not only a cost but the identities the system counts, and influence concentrates without becoming visible.
	
	The question matters for security because a Sybil defense that prices identities assumes the price is paid per identity. If it can be paid once for several, the defense can overestimate the cost of a Sybil threshold, identity counts can overstate independent participation, and raising the price of a shareable resource that honest participants buy can be regressive. Whether cheaper identities yield more influence depends on the mechanism, so these consequences are read from the cost of an attack threshold~\cite{margolin2008quantifying}.
	
	The central question of this paper is therefore: when is a resource requirement that a Sybil defense imposes per identity not paid per identity, how much does this lower the cost of a Sybil attack, and which defense removes the reduction at the least cost to honest participants?
	This paper makes four contributions:
	\begin{itemize}
		\item \emph{Security formalization.} We state the per-identity cost guarantee on which resource-based Sybil defenses rest and define identity amortization as its failure: a strict reduction in minimum provisioning cost relative to a comparable independent baseline, at the resource-instance granularity fixed by the mechanism. Sharing is necessary for identity amortization but not sufficient. Under count-based influence, the failure carries over to the cost of reaching a Sybil threshold; under resource-based influence it need not.
		\item \emph{Decision procedure.} For any catalogue of resource types with instance costs $K_i$, sharing capacities $m_i$, and an identity-specific cost $c$, we give a computable test, with proofs: identity amortization occurs if and only if some fully used resource is cheaper per identity than the cheapest single-identity resource. We determine exactly the number of identities at which it begins, and bound the ratio of independent to shared provisioning cost by an error factor that is attained at scale; the capacity model's bound $(K+c)/(K/m+c)\le m$ is its single-type case and applies to a threshold taken together; one further identity in spare capacity can cost only the identity-specific part (Proposition~\ref{prop:marginal}).
		\item \emph{Defensive interventions.} We compare raising identity-specific costs, raising resource prices, and capping sharing against an adversary who re-optimizes, locate exactly when raising a resource price is regressive, and show that a sharing cap below the onset removes the reduction without changing the price paid by a participant with one identity.
		\item \emph{Evaluation on published and measured data.} On a catalogue built from published phone-verification prices, a per-account estimate overstates the adversary's provisioning cost by a factor between $1.6$ and $4.3$, the onset, safe cap, and interventions follow the propositions, and the cheapest reported market price lies below the least cost of verifying each account with its own number. A case study of Tor's limit on relays per IPv4 address, using the measured network size, shows that raising the limit from two to eight lowered the model cost of reaching a quarter of the hidden-service directories by between $8\%$ and $72\%$, depending on the ratio of address cost to relay cost. These values lie within half a percentage point of the large-population limit of Proposition~\ref{prop:cap-limit}.
	\end{itemize}
	
	Section~\ref{sec:related} reviews related work. Section~\ref{sec:model} defines the model and identity amortization, Section~\ref{sec:capacity} analyzes its existence, onset, and magnitude, and Section~\ref{sec:influence} relates provisioning cost to influence. Section~\ref{sec:security} derives the security consequences. Section~\ref{sec:experiments} evaluates the results on phone verification and Section~\ref{sec:tor} on Tor. Section~\ref{sec:discussion} discusses implications and limitations, and Section~\ref{sec:conclusion} concludes. Proofs are in Appendix~\ref{app:proofs}.

	\section{Related Work}
	\label{sec:related}
	
	This section reviews prior work on Sybil attacks and their defenses and
	on resource-based identity costs and the economics of fake identities,
	and then positions this paper relative to that work.
	
	\subsection{Sybil Attacks and Resistance}
	\label{ssec:rw-sybil}
	
	Douceur~\cite{douceur2002sybil} introduced the Sybil attack and showed
	that, without a logically centralized authority, a distributed system
	cannot in general prevent one entity from presenting multiple
	identities. Subsequent surveys classify the resulting
	defenses~\cite{levine2006survey}. Social-network defenses such as
	SybilGuard, SybilLimit, and
	SybilInfer~\cite{yu2006sybilguard,yu2008sybillimit,danezis2009sybilinfer}
	exploit the limited number of trust relations between Sybil and honest
	regions of a social graph. Alvisi et
	al.~\cite{alvisi2013sok} survey the later development of these defenses
	and their connection to random walks on the trust graph. Proof-of-personhood approaches bind identities
	to physically present persons, for example through pseudonym
	parties~\cite{ford2008offline,borge2017pop}; Siddarth et
	al.~\cite{siddarth2020watchmen} review such protocols. These defenses
	restrict who can obtain an identity. This paper addresses a
	complementary question: when a system admits identities against a
	resource requirement, what an additional identity actually costs.
	
	\subsection{Resource-Based Identity Costs and Their Economics}
	\label{ssec:rw-resource}
	
	A long line of work makes identities or actions costly through
	computation, including pricing via processing~\cite{dwork1992pricing},
	Hashcash~\cite{back2002hashcash}, resource
	testing~\cite{aspnes2005exposing}, and computational puzzles as Sybil
	defenses~\cite{borisov2006puzzles}. Laurie and
	Clayton~\cite{laurie2004proof} concluded that proof-of-work would not stop
	spam, because spammers able to draw on large numbers of compromised
	machines could only be deterred by a cost that would also burden
	legitimate senders. Gupta et al.\ give resource-burning defenses in which,
	under significant attack, honest identities spend asymptotically less
	than the adversary~\cite{gupta2019puzzling,gupta2021bankrupting}. The
	resource in these defenses is consumed, so further participation pays it
	again. A shareable resource, such as a phone number or a network address,
	can instead be held once and used for several identities.
	Trusted hardware~\cite{costan2016intel} can bind an identity to an
	attested device, which fixes the unit that a mechanism counts at the
	device level. Phone
	verification is a widely deployed resource requirement: Thomas et
	al.~\cite{thomas2014dialing} found phone numbers reused across many
	abusive accounts, and Reaves et al.~\cite{reaves2016sms} found that public SMS
	gateways are used primarily to evade account-creation policies. These
	works study resource requirements per identity or document their
	circumvention in a specific setting. They do not ask, in a form that
	applies across mechanisms, when a requirement imposed per identity is
	in fact paid per identity.
	
	Margolin and Levine~\cite{margolin2008quantifying} proposed quantifying
	a system's resistance to the Sybil attack in economic terms. Measurement
	studies document the market for fraudulent accounts: Thomas et
	al.~\cite{thomas2013trafficking} studied the underground market for
	Twitter accounts, and Motoyama et al.~\cite{motoyama2010captcha} analyzed
	CAPTCHA-solving services as an economic activity. These studies show
	that identities can be obtained at low unit cost at scale; they do not
	model which part of that cost is shared across identities.
	\label{ssec:rw-economics}
	
	\subsection{Economies of Scope and Amortized Analysis}
	\label{ssec:rw-scope}
	
	The cost structure behind identity amortization is known in economics
	as \emph{economies of scope}~\cite{panzar1981scope,teece1980scope}:
	providing several outputs jointly can cost less than providing them
	separately when an input is shared. In that setting the joint producer
	is a single, known firm. In a Sybil defense the outputs are identities
	that the mechanism counts as separate participants, so the same saving
	is invisible to the mechanism and lets one entity obtain the influence
	of many. Economies of scope also do not say when the saving begins or
	how large the resulting error is. In the model of this paper, the
	possibility of sharing does not suffice for amortization, a cheaper
	single-identity option delays its onset, and raising a resource price
	can increase the error of a per-identity estimate
	(Sections~\ref{ssec:catalogue} and~\ref{ssec:regressive}). The term
	\emph{amortization} is also used in algorithm
	analysis~\cite{tarjan1985amortized}, where a cost is spread over a
	sequence of operations in time. Here the cost is spread over
	identities that exist together, and the comparison is static.
	
	\subsection{Positioning}
	\label{ssec:rw-positioning}
	
	It is established that a single entity can operate many identities, that
	resource requirements are the main economic defense against this, and
	that such requirements can be reused in practice; the term
	\emph{amortization} itself appears in measurement work on phone-verified
	accounts~\cite{thomas2014dialing}. A related question
	in mechanism design is false-name-proofness, whether an agent gains by
	participating under more than one
	identifier~\cite{yokoo2004falsename,conitzer2010falsename}. Those
	mechanisms are designed so that extra identifiers do not help, often
	while obtaining an identifier is treated as cheap. This paper takes the
	other interface: the admission rule is fixed, and the question is whether
	the resource cost that is supposed to make an identity expensive is paid
	once for many identities. Its contribution is a security
	definition, the failure of the per-identity cost guarantee measured
	against a baseline that keeps the same resource types and prices, and a
	decision procedure built on it for the defender. It does not claim to
	have discovered the phenomenon, or that it occurs in every system.
	
	\section{System Model and Definition}
	\label{sec:model}
	\label{sec:background}
	
	\subsection{Identities, Resources, and Influence}
	\label{ssec:identity-entity}
	
	\begin{definition}[Identity, entity, resource]
		\label{def:identity}
		An \emph{identity} is a representation that a system recognizes and
		treats as a distinct participant, such as an account, a public-key
		identifier, or a node identifier. An \emph{entity} is a controlling
		party that may be represented by one or more identities. A
		\emph{resource} is an asset, capability, or input used to establish,
		maintain, or operate an identity, or to determine its influence.
	\end{definition}
	
	A system may be unable to tell whether identities share a controller, so
	identity count and entity count are different quantities.
	
	\begin{definition}[Influence and sharing]
		\label{def:influence}
		\label{def:sharing}
		For a mechanism $\M$ and a set $S$ of identities, $I_{\M}(S)$ denotes
		the influence that $\M$ assigns to $S$. A resource is \emph{shareable
			across identities} if one instance of it can support more than one
		identity without a separately provisioned instance for each.
	\end{definition}
	
	Influence need not be additive or proportional to the number of
	identities (Section~\ref{sec:influence}), and sharing does not by itself
	lower total cost.
	
	\subsection{Provisioning Strategies and the Independent Baseline}
	\label{ssec:strategies}
	\label{ssec:baseline}
	
	Let $S$ be a finite, nonempty set of identities and $g=|S|$. A
	\emph{resource instance} is the smallest unit of a resource that is
	acquired, held, and assigned as a whole, such as one phone number, one
	device, or one network address. When a single provisioning unit, such
	as a contract or a package, itself supports several identities, that
	unit is one instance. Its granularity is fixed by the mechanism being
	modeled, as the unit the mechanism verifies or the unit in which the
	resource is acquired; it is not chosen so as to produce or avoid
	amortization, and every statement about identity amortization is
	relative to it. A \emph{provisioning strategy} $\pi$
	specifies the instances that are acquired and which identities each
	instance $r$ supports, $\sigma_\pi(r)\subseteq S$; it is feasible if it
	meets the mechanism's requirements for every identity in $S$. A single
	cost function assigns each strategy a cost $C(\pi)\in[0,\infty)$ in one
	unit; it may be nonlinear, for example with bulk discounts.
	
	Let $\Pishared(S)$ be the set of feasible strategies when an instance may
	support more than one identity. The independent baseline keeps the same
	requirements, resource types, and cost function, with one restriction:
	\[
	\begin{aligned}
		\Piind(S)
		=
		\bigl\{\pi\in\Pishared(S):\ &|\sigma_\pi(r)|\le 1\\
		&\text{for all $r$ used by $\pi$}\bigr\}\subseteq\Pishared(S).
	\end{aligned}
	\]
	The restriction limits the identities per instance, not the number of
	instances, so a discount for buying many single-identity instances is
	available in both sets. Any saving that arises from one instance
	supporting several identities is attributable to sharing; any other
	saving is available under both arrangements and does not, by itself,
	constitute identity amortization.
	
	\subsection{Optimal Costs}
	\label{ssec:min-costs}
	\label{ssec:assumptions}
	
	\begin{definition}[Optimal Provisioning Cost]
		\label{def:opt-cost}
		For $X\in\{\mathrm{shared},\mathrm{ind}\}$,
		$C_X(S)=\inf_{\pi\in\Pi_X(S)} C(\pi)$, with $\inf\varnothing=+\infty$.
	\end{definition}
	
	Infima are used because optimal strategies need not exist. One cost
	function, in one unit, is applied to both strategy sets, and the
	baseline differs from shared provisioning only by the restriction above.
	
	\begin{lemma}[Cost ordering]
		\label{prop:cost-order}
		For every finite, nonempty identity set $S$,
		$\Cshared(S)\le\Cind(S)$.
	\end{lemma}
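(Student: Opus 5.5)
The inequality follows from monotonicity of the infimum under set inclusion. By the definition in Section~\ref{ssec:baseline}, $\Piind(S)\subseteq\Pishared(S)$. The baseline is defined as the subset of shared strategies meeting the extra restriction $|\sigma_\pi(r)|\le 1$. Both optimal costs in Definition~\ref{def:opt-cost} are infima of the \emph{same} cost function $C$ over these two sets, so no unit conversion or reparametrization is needed. The plan is to show that every value attained over the smaller set is bounded below by the infimum over the larger set.

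First I will treat the case $\Piind(S)=\varnothing$. Then $\Cind(S)=+\infty$ by the convention $\inf\varnothing=+\infty$, and the inequality holds trivially because $\Cshared(S)\in[0,\infty]$. Second, suppose $\Piind(S)\neq\varnothing$. Take any $\pi\in\Piind(S)$. Then $\pi\in\Pishared(S)$, so $\Cshared(S)\le C(\pi)$ by the definition of the infimum. Hence $\Cshared(S)$ is a lower bound for $\{C(\pi):\pi\in\Piind(S)\}$. The infimum is the greatest lower bound, so $\Cshared(S)\le\Cind(S)$. No attainment is needed, which matters because the paper explicitly allows optimal strategies not to exist.

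There is no real obstacle. The only points needing care are the extended-real conventions and the fact that the same $C$ applies to both sets. The latter is what makes the inclusion of strategy sets transfer to the costs, and the paper stipulates it in Section~\ref{ssec:min-costs}. Nonlinearity of $C$, such as bulk discounts, is irrelevant, since the argument compares values of $C$ strategy by strategy.
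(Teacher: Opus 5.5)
Your proof is correct and matches the paper's argument, which likewise derives the inequality from $\Piind(S)\subseteq\Pishared(S)$ alone by monotonicity of the infimum, with no need for linear costs or attained optima. Your separate treatment of the case $\Piind(S)=\varnothing$ via $\inf\varnothing=+\infty$ spells out a detail the paper leaves implicit.
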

	
	The lemma follows from $\Piind(S)\subseteq\Pishared(S)$ alone; it does
	not require linear costs or the existence of optimal strategies. It
	states that permitting sharing never raises the optimal cost, not that
	it lowers it.
	
	\subsection{Definition of Identity Amortization}
	\label{ssec:def}
	
	Resource-based Sybil defenses rest on a cost assumption: requiring a
	resource for each identity is meant to make supporting several
	identities cost at least as much as supporting each one with its own
	resources.
	
	\begin{definition}[Per-Identity Cost Guarantee]
		\label{def:guarantee}
		For a finite, nonempty identity set $S$ with $\Cind(S)<+\infty$,
		the \emph{per-identity cost guarantee} holds for $S$ if
		$\Cshared(S)\ge\Cind(S)$.
	\end{definition}
	
	By Lemma~\ref{prop:cost-order}, the guarantee holds exactly when
	$\Cshared(S)=\Cind(S)$. It concerns provisioning cost only; that it holds
	does not by itself make a system resistant to Sybil attacks.
	
	\begin{definition}[Identity Amortization]
		\label{def:ia}
		Let $S$ be a finite, nonempty identity set with
		$0<\Cind(S)<+\infty$. The \emph{amortization ratio} of $S$ is
		$\alpha(S)=\Cshared(S)/\Cind(S)\in[0,1]$, and \emph{identity
			amortization occurs for $S$} if $\alpha(S)<1$, equivalently
		$\Cshared(S)<\Cind(S)$.
	\end{definition}
	
	Identity amortization occurs for $S$ if and only if the per-identity
	cost guarantee fails for $S$. It is the strict cost reduction that may
	result from sharing, not the possibility of sharing. Sharing is
	necessary for it, since any strategy cheaper than $\Cind(S)$ lies outside
	$\Piind(S)$ and so lets some instance support two or more identities; it
	is not sufficient, since a mechanism may permit sharing without any
	shared strategy costing less than $\Cind(S)$. Because both costs refer
	to the same identity set, $\alpha(S)$ is also the ratio of the average
	optimal costs per identity, $\Cshared(S)/|S|$ and $\Cind(S)/|S|$:
	identity amortization means a strictly lower average cost per identity
	when resources may be shared.
	
	\paragraph{Relation to the Sybil attack} A Sybil attack exploits
	identities that are cheap to create, so that one entity can present
	many of them. A defense that requires a resource for each identity
	addresses this by making identities costly. Identity amortization is a
	second route to the same gap: the requirement may be costly for each
	identity in isolation, yet the adversary does not pay it in full for
	each identity. Whether this lowers the cost of an attack depends on how
	the mechanism assigns influence. When every provisioned identity is
	accepted and influence is assigned per accepted identity, the cost of
	reaching a Sybil threshold equals the cost of provisioning the required
	number of identities (Section~\ref{ssec:count-based}); a failure of the
	guarantee at that number therefore makes the threshold cheaper than the
	independent estimate, by the factor $R_\tau$ of
	Section~\ref{ssec:threshold}. This factor is a ratio of costs; it is not
	a multiplier on the adversary's influence or on the probability that an
	attack succeeds. When influence is assigned per unit of an
	influence-bearing resource, cheaper identities need not yield more
	influence, although sharing can still lower the cost of influence
	(Section~\ref{ssec:resource-based}).
	
	\paragraph{Scope}
	\label{ssec:scope}
	The definition is relative to the baseline, the cost function, and the
	granularity of resource instances, and it concerns provisioning cost
	only. It does not imply that every shareable resource produces
	amortization, that the marginal cost of an identity is zero, that more
	identities yield more influence, or that amortization leads to a
	successful Sybil attack. How large the reduction can be, when it lowers
	the cost of attaining influence, and when it lowers the cost of a Sybil
	threshold are examined in
	Sections~\ref{sec:capacity}--\ref{sec:security}.
	The comparison is static. $C(\pi)$ is the total cost of provisioning
	every identity in $S$ so that the mechanism accepts it, and an instance
	supports, at once, the identities assigned to it. How provisioning costs
	evolve over time, including the sequential reuse of an instance, is
	outside the model.
	
	\section{Identity Amortization and the Cost of Multi-Identity Provisioning}
	\label{sec:capacity}
	
	\subsection{Count-Level Costs}
	\label{ssec:provisioning-costs}
	
	An adversary may choose which identities to provision, so the cost of
	$g$ identities is derived from the set-level costs.
	
	\begin{definition}[Count-Level Provisioning Cost]
		\label{def:count-cost}
		For $X\in\{\mathrm{shared},\mathrm{ind}\}$ and an integer $g\ge1$,
		$\bar C_X(g)=\inf\{C_X(S):\ S\ \text{finite},\ |S|=g\}$.
	\end{definition}
	
	By Lemma~\ref{prop:cost-order},
	\begin{equation}
		\Cbshared(g)\le\Cbind(g)\qquad\text{for every } g\ge1,
		\label{eq:cost-ordering}
	\end{equation}
	a weak ordering only. It is strict only when some shared arrangement
	for $g$ identities costs less than every independent one; the existence
	of a shareable resource does not guarantee this, since costs that
	remain identity-specific, or a cheaper single-identity alternative, may
	offset the benefit of sharing. The shared and independent infima may be
	attained on different identity sets. For $g$ with $0<\Cbind(g)<+\infty$ we write
	$\Delta(g)=\Cbind(g)-\Cbshared(g)$ for the absolute saving and
	$\alpha(g)=\Cbshared(g)/\Cbind(g)$ for the count-level amortization
	ratio. The per-identity cost guarantee holds at $g$ if
	$\Cbshared(g)\ge\Cbind(g)$, and identity amortization occurs at $g$ if
	$\alpha(g)<1$.
	
	\subsection{A Parametric Capacity Model}
	\label{ssec:capacity-model}
	\label{ssec:capacity-and-specific-costs}
	
	In the capacity model, every identity needs one unit of capacity of a
	single shareable resource type and incurs a non-shareable
	identity-specific cost $c\ge0$; every instance costs $K>0$, whether or not
	its capacity is used, and supports at most $m\ge1$ identities; the cost of
	a strategy is $K$ times the number of instances plus $c$ times the number
	of identities; and any assignment that respects the capacity is feasible.
	The same instance type is available under both arrangements.
	
	\begin{lemma}[Costs in the capacity model]
		\label{lem:capacity-costs}
		For every integer $g\ge1$,
		\begin{equation}
			\Cbind(g)=g(K+c),\qquad
			\Cbshared(g)=K\left\lceil\frac{g}{m}\right\rceil+gc,
			\label{eq:shared-capacity-model}
		\end{equation}
		and both infima are attained.
	\end{lemma}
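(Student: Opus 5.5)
The plan is to fix an arbitrary identity set $S$ with $|S|=g$ and compute $\Cind(S)$ and $\Cshared(S)$ exactly in the capacity model. Both values will turn out to depend only on $g$, so the infimum over sets in Definition~\ref{def:count-cost} is trivial and attained by any $S$ of size $g$. In each case I will give a matching lower bound and an explicit strategy. Since any strategy's cost is determined by its number of instances $n$ and the number of identities $g$, the cost is $C(\pi)=nK+gc$. The problem therefore reduces to finding the minimum number of instances a feasible strategy can use.

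\emph{Independent case.} Under $\Piind(S)$ every instance used supports at most one identity, and feasibility requires each identity to receive one unit of capacity. Hence the identities map injectively into the instances that support them, so $n\ge g$ and $C(\pi)\ge gK+gc$. The strategy that buys $g$ instances and assigns one identity to each respects capacity $m\ge1$, so it lies in $\Piind(S)$. It costs exactly $g(K+c)$, so the infimum is a minimum.

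\emph{Shared case.} Each instance supports at most $m$ identities and every identity must be supported, so $g\le\sum_r|\sigma_\pi(r)|\le nm$. This gives $n\ge g/m$, and since $n$ is an integer, $n\ge\lceil g/m\rceil$. Conversely, let $n=\lceil g/m\rceil$ and partition $S$ into $n$ blocks of size at most $m$; for instance, take consecutive blocks of $m$ in some enumeration, with the last block holding the remainder. Assigning each block to its own instance respects capacity, so this strategy is feasible. Its cost is $K\lceil g/m\rceil+gc$, which attains the bound.

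The only delicate point is the modelling convention, not the arithmetic. One has to read the model as stating that unused instances still cost $K$, so a strategy never benefits from extra instances. One also has to read the identity-specific cost $c$ as charged once per identity in $S$ regardless of the strategy. With these readings the cost formula $nK+gc$ is exact, and the proof reduces to the counting bound $n\ge\lceil g/m\rceil$ together with the block construction.
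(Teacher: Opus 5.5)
Your proof is correct and follows the same route as the paper's proof. You count the minimum number of instances ($g$ under the one-identity-per-instance restriction, $\lceil g/m\rceil$ from the capacity bound, attained by filling instances in turn), note that $gc$ is common to both arrangements, and observe that the cost depends only on $|S|$, so the count-level infimum is attained; you just spell out the injectivity and $g\le nm$ steps that the paper states in one line.
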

	
	Write $\alpha_m=(K/m+c)/(K+c)$.
	
	\begin{proposition}[Cost reduction in the capacity model]
		\label{prop:capacity-cost-reduction}
		For every integer $g\ge1$:
		\begin{enumerate}[label=(\alph*),leftmargin=*]
			\item $\Delta(g)=K(g-\lceil g/m\rceil)$, which does not depend on
			$c$ and is strictly positive if and only if $m>1$ and $g>1$.
			\item $\alpha(g)\ge\alpha_m$, with equality if and only if $m$ divides
			$g$, and $\lim_{g\to\infty}\alpha(g)=\alpha_m$.
		\end{enumerate}
	\end{proposition}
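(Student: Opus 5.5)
The plan is to work directly from the closed forms in Lemma~\ref{lem:capacity-costs}; no further optimization is needed, since both infima are attained there. For (a), subtracting the two expressions in \eqref{eq:shared-capacity-model} gives
\[
\Delta(g)=g(K+c)-K\lceil g/m\rceil-gc=K\bigl(g-\lceil g/m\rceil\bigr),
\]
in which the $gc$ terms cancel, so $\Delta(g)$ does not depend on $c$. Since $K>0$, positivity reduces to $\lceil g/m\rceil<g$. If $m=1$, then $\lceil g/m\rceil=g$. If $g=1$, then $\lceil 1/m\rceil=1=g$. If $m\ge2$ and $g\ge2$, then $\lceil g/m\rceil\le\lceil g/2\rceil<g$, because $\lceil g/2\rceil\le (g+1)/2<g$ whenever $g>1$. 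This settles the equivalence.

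For (b), write $\lceil g/m\rceil=g/m+\theta$ with $\theta\in[0,1)$, where $\theta=0$ exactly when $m\mid g$. Then
\[
\alpha(g)=\frac{K\lceil g/m\rceil+gc}{g(K+c)}=\frac{K/m+c}{K+c}+\frac{K\theta}{g(K+c)}=\alpha_m+\frac{K\theta}{g(K+c)}.
\]
Because $K>0$ and $g(K+c)>0$, the extra term is nonnegative and vanishes iff $\theta=0$, that is, iff $m$ divides $g$. This gives the inequality $\alpha(g)\ge\alpha_m$ together with its equality case. Note that $0<\Cbind(g)<\infty$ holds for all $g\ge1$, so $\alpha(g)$ is always defined.

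For the limit, the same identity gives $0\le\alpha(g)-\alpha_m<K/\bigl(g(K+c)\bigr)\le 1/g\to0$, since $c\ge0$. Hence $\alpha(g)\to\alpha_m$ by squeezing.

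There is no genuine obstacle here. The only points needing care are the case analysis for strict positivity in (a), in particular justifying $\lceil g/m\rceil<g$ for $m,g\ge2$, and writing the ceiling as $g/m$ plus a fractional excess so that the equality case in (b) is read off exactly.
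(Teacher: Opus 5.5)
Your proposal is correct and follows the paper's proof: you subtract the closed forms of Lemma~\ref{lem:capacity-costs} for (a) and divide them for (b), using that $\lceil g/m\rceil/g\ge 1/m$, with equality exactly when $m\mid g$, and that this ratio tends to $1/m$. Your explicit case check for $\lceil g/m\rceil<g$ and your decomposition $\lceil g/m\rceil=g/m+\theta$, which gives the rate $0\le\alpha(g)-\alpha_m<1/g$, spell out steps the paper states without detail.
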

	
	Proofs of this and the following results are in
	Appendix~\ref{app:proofs}. The proposition separates the roles of the
	parameters. If $m=1$, sharing gives no capacity advantage, and if
	$g=1$, both arrangements need one instance. A larger capacity $m$
	cannot raise the shared cost, because $\lceil g/m\rceil$ is
	nonincreasing in $m$. The identity-specific cost $c$ does not change
	the absolute saving but bounds the relative one: sharing cannot remove
	more than the fraction $\frac{K}{K+c}(1-\frac1m)$ of the independent
	cost. The same holds for any cost component that every strategy pays in
	the same amount under both arrangements.
	
	The saving also grows in steps. Under shared provisioning, a new
	instance is needed only when the capacity of the existing ones is
	exhausted, so an additional identity costs $c$ between capacity
	boundaries and $K+c$ at a boundary, whereas under independent
	provisioning it always costs $K+c$. Hence $\Delta(g+1)-\Delta(g)$ is $K$
	if $m\nmid g$ and $0$ if $m\mid g$, and $\Delta(g)/g\to K(1-1/m)$.
	
	\begin{proposition}[Marginal overestimation]
		\label{prop:marginal}
		Let $c>0$ and let $g\ge1$ be an integer. One further identity costs
		$c$ under shared provisioning if $m\nmid g$, and $K+c$ if $m\mid g$.
		Against the independent cost $K+c$ of that identity, the ratio is
		$(K+c)/c$ in the first case and $1$ in the second. The ratio
		$(K+c)/c$ is greater than $m$ if and only if $K>(m-1)c$.
	\end{proposition}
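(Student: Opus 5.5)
The plan is to read the marginal cost of the $(g+1)$-st identity directly off the closed forms of Lemma~\ref{lem:capacity-costs}. Under shared provisioning, the marginal cost is
\[
\Cbshared(g+1)-\Cbshared(g)=K\Bigl(\Bigl\lceil\tfrac{g+1}{m}\Bigr\rceil-\Bigl\lceil\tfrac{g}{m}\Bigr\rceil\Bigr)+c .
\]
Under independent provisioning, it is $\Cbind(g+1)-\Cbind(g)=K+c$. No optimization argument is needed, because both infima are attained and given explicitly by the lemma.

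The first step is the elementary ceiling fact: $\lceil (g+1)/m\rceil-\lceil g/m\rceil$ equals $1$ if $m\mid g$ and $0$ otherwise. To show this, write $g=qm+r$ with $0\le r<m$.
\begin{itemize}
\item If $r=0$, then $\lceil g/m\rceil=q$ and $\lceil (g+1)/m\rceil=q+1$. This uses $m\ge1$; when $m=1$ we have $\lceil(g+1)/1\rceil=g+1$, which agrees.
\item If $1\le r\le m-1$, then $g+1=qm+(r+1)$ with $2\le r+1\le m$. Hence both ceilings equal $q+1$.
\end{itemize}
This gives marginal shared cost $c$ when $m\nmid g$ and $K+c$ when $m\mid g$. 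It matches the step description of $\Delta(g+1)-\Delta(g)$ given after Proposition~\ref{prop:capacity-cost-reduction}, which serves as a consistency check.

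Next come the ratios. Dividing the independent marginal cost $K+c$ by the shared one gives $(K+c)/c$ in the first case, which is well defined since $c>0$, and $1$ in the second.

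The final equivalence rests on one fact: because $c>0$, multiplying by $c$ preserves the inequality. Therefore $(K+c)/c>m$ holds if and only if $K+c>mc$, that is, if and only if $K>(m-1)c$.

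There is no real obstacle; the only care needed is the ceiling case split, including the edge case $m=1$. In that case $m\mid g$ always holds, so the first case is vacuous.
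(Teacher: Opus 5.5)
Your proof is correct and follows the paper's argument step for step. You read the marginal shared cost $K\bigl(\lceil (g+1)/m\rceil-\lceil g/m\rceil\bigr)+c$ off Lemma~\ref{lem:capacity-costs}, use the ceiling-difference fact (which you verify through $g=qm+r$, where the paper only asserts it), divide into $K+c$, and rearrange using $c>0$.
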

	
	Proposition~\ref{prop:marginal} separates this step from the cost of a
	whole set. The ratio for the set tends to $(K+c)/(K/m+c)$, which is at
	most $m$; the ratio for one further identity in spare capacity is
	$(K+c)/c$, which need not be.
	
	\begin{proposition}[Large-population cap increase]
		\label{prop:cap-limit}
		Let $k$ and $\ell$ be integers with $\ell>k\ge1$. As $g\to\infty$,
		the relative reduction in shared cost from raising the capacity
		from $k$ to $\ell$ tends to $(\ell-k)/\ell$ if $c=0$. If $c>0$ and
		$r=K/c$, it tends to
		\begin{equation}
			\frac{r(\ell-k)}{\ell(r+k)},
			\label{eq:cap-limit}
		\end{equation}
		which is strictly increasing in $r$ and lies strictly between $0$
		and $(\ell-k)/\ell$.
	\end{proposition}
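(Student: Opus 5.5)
The plan is to work directly from the closed form of Lemma~\ref{lem:capacity-costs}. With capacity $m$, the shared cost is $\Cbshared(g)=K\lceil g/m\rceil+gc$. The relative reduction from raising the capacity from $k$ to $\ell$ at population $g$ is therefore
\[
\rho(g)=\frac{K\lceil g/k\rceil-K\lceil g/\ell\rceil}{K\lceil g/k\rceil+gc}.
\]
The first step divides numerator and denominator by $g$. Since $g/m\le\lceil g/m\rceil<g/m+1$, we have $\lceil g/m\rceil/g\to 1/m$ for each fixed $m\in\{k,\ell\}$. The numerator over $g$ then tends to $K(1/k-1/\ell)=K(\ell-k)/(k\ell)$, and the denominator over $g$ tends to $K/k+c$. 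The denominator limit is strictly positive because $K>0$, so the quotient of the limits is legitimate.

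The two cases follow from this. If $c=0$, the limit is $(1/k-1/\ell)/(1/k)=(\ell-k)/\ell$. If $c>0$, I divide through by $c$ and write $r=K/c$. The limit becomes $r(\ell-k)/(k\ell)$ divided by $r/k+1$, which simplifies to $r(\ell-k)/(\ell(r+k))$, matching \eqref{eq:cap-limit}.

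For the qualitative claims I write \eqref{eq:cap-limit} as $\frac{\ell-k}{\ell}\cdot\frac{r}{r+k}$. The map $r\mapsto r/(r+k)=1-k/(r+k)$ is strictly increasing on $r>0$ because $k\ge1$, and $\ell>k$ makes the prefactor positive. Since $0<r/(r+k)<1$ for every $r>0$ and $k\ge1$, the limit lies strictly between $0$ and $(\ell-k)/\ell$.

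None of these steps is a real obstacle. The only point needing care is justifying the limit of the ceiling terms and checking that the denominator stays bounded away from zero, which holds because $K>0$. I would also note that the limit is taken along all integers $g$ and does not depend on divisibility. The sandwich bound on $\lceil g/m\rceil$ handles this uniformly, consistent with Proposition~\ref{prop:capacity-cost-reduction}(b).
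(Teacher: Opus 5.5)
Your proposal is correct and follows essentially the paper's route. The paper divides both shared costs by $g$, takes the limit of their ratio $(K/\ell+c)/(K/k+c)$ and subtracts it from one, which is algebraically the same as your limit of the difference over the cost at capacity $k$, and it proves monotonicity via the derivative $k/(r+k)^2>0$ where you use $r/(r+k)=1-k/(r+k)$; your sandwich bound $g/m\le\lceil g/m\rceil<g/m+1$ and your positivity check on the denominator limit make explicit steps the paper uses without comment.
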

	
	\emph{Proof idea.} Dividing both shared costs by $g$ and using $\lceil g/t\rceil/g\to1/t$, their ratio tends to $(K/\ell+c)/(K/k+c)$; one minus this ratio is the stated limit.
	
	A rise from capacity $2$ to capacity $8$ therefore tends to
	$3r/(4(r+2))$, which equals $1/2$ at $r=4$. The limit depends on the
	price ratio and the two capacities, not on the number of identities or
	on a target share. Section~\ref{sec:tor} compares it with the reduction
	at a measured network size.
	
	The model gives the independent baseline the same instance type as the
	shared arrangement. When a cheaper resource suffices for a single
	identity, the cheapest independent arrangement may use it, and a strict
	reduction then occurs only above a threshold number of identities.
	Section~\ref{ssec:catalogue} determines this threshold exactly.
	
	\subsection{Resource Catalogues: Existence, Onset, and Magnitude}
	\label{ssec:catalogue}
	
	The capacity model fixes a single resource type. A defender who
	assesses a deployed requirement typically faces several ways of
	satisfying it, such as resources of different sizes, prices, and
	capacities, and an adversary chooses whichever combination is cheapest.
	A \emph{resource catalogue} is a finite set of types
	$i=1,\dots,n$, where one instance of type $i$ costs $K_i>0$ and supports
	at most $m_i\ge1$ identities, every identity incurs the
	identity-specific cost $c\ge0$, instances are indivisible and charged in
	full, and any combination may be acquired. Under shared provisioning,
	$x_i$ instances of each type support $g$ identities exactly when
	$\sum_ix_im_i\ge g$; under independent provisioning, each instance
	supports at most one identity. With
	\[
	K_{\min}=\min_iK_i
	\qquad\text{and}\qquad
	\rho^*=\min_i\frac{K_i}{m_i}\ \le\ K_{\min},
	\]
	the cheapest single-identity instance and the cheapest instance cost
	per supported identity, the count-level costs are
	\[
	\Cbind(g)=g(K_{\min}+c),
	\]
	\[
	\begin{aligned}
		\Cbshared(g)=\min\Bigl\{\textstyle\sum_ix_iK_i:\ &x\in\mathbb{N}^n,\\
		&\textstyle\sum_ix_im_i\ge g\Bigr\}+gc.
	\end{aligned}
	\]
	The capacity model is the case $n=1$. We call type $i$
	\emph{amortizing} if $K_i<m_iK_{\min}$, that is, if a fully used instance
	of type $i$ is cheaper per identity than the cheapest single-identity
	instance.
	
	\begin{lemma}[Per-identity floor]
		\label{lem:floor}
		The instance cost of every combination that supports $g$ identities
		is at least $g\rho^*$.
	\end{lemma}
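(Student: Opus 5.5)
The plan is a direct termwise comparison using only the definition of $\rho^*$. Fix a combination $x\in\mathbb{N}^n$ that supports $g$ identities, so $\sum_i x_i m_i\ge g$ by the feasibility condition of the catalogue. Its instance cost is $\sum_i x_iK_i$. I will bound each summand from below separately and then use the capacity constraint.

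First, for every type $i$ the definition $\rho^*=\min_j K_j/m_j$ gives $K_i/m_i\ge\rho^*$. Since $m_i\ge1>0$, this is equivalent to $K_i\ge m_i\rho^*$. Multiplying by the nonnegative integer $x_i$ gives $x_iK_i\ge x_im_i\rho^*$.

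Summing over $i$ yields $\sum_ix_iK_i\ge\rho^*\sum_ix_im_i$. Because $\rho^*>0$ (all $K_i>0$) and $\sum_ix_im_i\ge g$, this is at least $g\rho^*$, which is the claim. The argument uses neither indivisibility nor the integrality of $x$ beyond $x_i\ge0$. It therefore also gives the fractional lower bound, which will be useful later for attainment at scale, since $\lceil g/m_i\rceil$ copies of a minimizing type approach $g\rho^*$.

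There is no real obstacle here. The only point to state explicitly is the direction of the inequality when multiplying by $\rho^*$, which requires $\rho^*\ge0$; this holds because every $K_i>0$ and every $m_i\ge1$. The result is also clearly independent of $c$, because the identity-specific term $gc$ is excluded from the instance cost.
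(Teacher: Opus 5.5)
Your proof is correct and is essentially the paper's own one-line argument: write $\sum_i x_iK_i=\sum_i (x_im_i)(K_i/m_i)$, bound each ratio below by $\rho^*$, and then use $\sum_i x_im_i\ge g$ together with $\rho^*\ge 0$. Your extra remarks, that integrality is not needed and that $c$ plays no role, are accurate but not required for the lemma.
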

	
	\begin{proof}
		$\sum_ix_iK_i=\sum_i(x_im_i)(K_i/m_i)\ge\rho^*\sum_ix_im_i\ge\rho^*g$.
	\end{proof}
	
	\begin{proposition}[Existence and onset]
		\label{prop:onset}
		Identity amortization occurs for some number of identities if and
		only if $\rho^*<K_{\min}$, equivalently if some type is amortizing.
		In that case, writing
		\[
		g^*=\min_{i\ \text{amortizing}}
		\left(\left\lfloor\frac{K_i}{K_{\min}}\right\rfloor+1\right),
		\]
		$\Cbshared(g)<\Cbind(g)$ if and only if $g\ge g^*$.
	\end{proposition}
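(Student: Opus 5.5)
The plan is to cancel the identity-specific term and reduce everything to instance costs. Both $\Cbind(g)$ and $\Cbshared(g)$ contain the same term $gc$, so $\Cbshared(g)<\Cbind(g)$ holds if and only if
\[
D(g)=\min\Bigl\{\textstyle\sum_ix_iK_i:\ x\in\mathbb{N}^n,\ \sum_ix_im_i\ge g\Bigr\}<gK_{\min}.
\]
The minimum exists because only finitely many $x$ with $\sum_i x_iK_i\le gK_{\min}$ need be considered. I then prove three implications, and together they give both the existence claim and the onset claim.

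\emph{No amortizing type.} Suppose no type is amortizing. Then $K_i\ge m_iK_{\min}$ for every $i$, so $\rho^*=K_{\min}$. Lemma~\ref{lem:floor} gives $D(g)\ge gK_{\min}$ for all $g$, so amortization never occurs. The equivalence between $\rho^*<K_{\min}$ and the existence of an amortizing type is immediate from the definitions.

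\emph{Upper direction: $g\ge g^*$.} Fix an amortizing type $i$ that attains $g^*$. From $K_i/K_{\min}<m_i$ we get $\lfloor K_i/K_{\min}\rfloor\le m_i-1$, hence $g^*\le m_i$. We also have $K_i<g^*K_{\min}$, because $\lfloor x\rfloor+1>x$. Given $g\ge g^*$, let $h=\min(g,m_i)\ge g^*$. The construction uses one instance of type $i$ for $h$ identities and $g-h$ instances of a cheapest type (cost $K_{\min}$, capacity at least $1$) for the remaining identities. Its cost satisfies
\[
K_i+(g-h)K_{\min}<hK_{\min}+(g-h)K_{\min}=gK_{\min},
\]
so $D(g)<gK_{\min}$.

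\emph{Lower direction: $g<g^*$.} This is the step I expect to need the most care, because Lemma~\ref{lem:floor} is too weak here ($\rho^*<K_{\min}$). The idea is to truncate each instance's capacity at $g$ and bound instances one at a time. The claim is that every instance of every type $j$ satisfies $K_j\ge\min(m_j,g)K_{\min}$.
\begin{itemize}
\item If $j$ is not amortizing, then $K_j\ge m_jK_{\min}$.
\item If $j$ is amortizing, then $g\le g^*-1\le\lfloor K_j/K_{\min}\rfloor$, so $K_j\ge gK_{\min}$.
\end{itemize}
Now take any combination that covers $g$ identities and sum this bound over its instances. The total $\sum\min(m_j,g)$ is at least $g$: if some instance has $m_j\ge g$, that single term already equals $g$; otherwise each term equals $m_j$, and these sum to at least $g$ by feasibility. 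Hence the cost is at least $gK_{\min}$, so $D(g)\ge gK_{\min}$ and amortization fails at $g$.

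The two directions give the ``if and only if $g\ge g^*$'' statement, and since $g^*$ is finite, amortization occurs for some $g$ exactly when an amortizing type exists.
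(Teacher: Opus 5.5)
Your proof is correct and follows the paper's argument: the floor lemma settles the case $\rho^*=K_{\min}$, a single instance of an amortizing type attaining $g^*$ padded with cheapest single-identity instances handles $g\ge g^*$, and an instance-by-instance bound handles $g<g^*$. The differences are only in packaging: your choice $h=\min(g,m_i)$ merges the paper's two constructive cases ($g\le m_i$ and $g>m_i$), and your truncated bound $K_j\ge\min(m_j,g)K_{\min}$ restates the paper's split on whether the combination contains an instance of an amortizing type.
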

	
	\emph{Proof idea.} If $\rho^*=K_{\min}$, the per-identity floor (Lemma~\ref{lem:floor}) rules out any saving. If type $i$ is amortizing, one of its instances supports $\lfloor K_i/K_{\min}\rfloor+1$ identities for less than that many cheapest single-identity instances, and adding cheapest instances keeps the saving at every larger count; below $g^*$, any instance of an amortizing type alone costs at least $gK_{\min}$, and every other instance costs at least $K_{\min}$ per identity it supports.
	
	The test uses only prices and capacities: a defender can determine
	whether the per-identity cost guarantee can fail, and from which number
	of identities onward, without modeling the adversary further. The onset
	depends on the cheapest single-identity option as well as on the
	shareable resource: a cheaper single-identity resource lowers $K_{\min}$
	and so delays the onset. In the capacity model, $g^*=2$, as in
	Proposition~\ref{prop:capacity-cost-reduction}(a), and the reduction
	remains strict at every larger count.
	
	A cap on sharing, which limits every instance to at most $k$
	identities, replaces each $m_i$ by $\min(m_i,k)$. Its effect follows
	from Proposition~\ref{prop:onset}.
	
	\begin{corollary}[Safe sharing cap]
		\label{cor:safe-cap}
		If identity amortization occurs in a catalogue, then under a cap $k$
		it occurs for no number of identities if and only if $k\le g^*-1$.
	\end{corollary}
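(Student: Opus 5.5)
The plan is to apply Proposition~\ref{prop:onset} to the capped catalogue, in which each type $i$ has capacity $m_i'=\min(m_i,k)$ and the same cost $K_i$. The cap does not change the single-identity costs, so $K_{\min}$ and $\Cbind(g)=g(K_{\min}+c)$ are the same before and after capping. By Proposition~\ref{prop:onset}, amortization occurs for no number of identities under the cap if and only if no type is amortizing under the cap. That is, the cap removes amortization exactly when $K_i\ge \min(m_i,k)K_{\min}$ for every $i$.

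Next I will show that this condition is equivalent to $k\le g^*-1$. For a non-amortizing type, $K_i\ge m_iK_{\min}\ge\min(m_i,k)K_{\min}$ holds for every $k$, so such types impose no constraint. For an amortizing type, $K_i<m_iK_{\min}$, and the condition $K_i\ge\min(m_i,k)K_{\min}$ cannot hold through the $m_i$ branch of the minimum. It therefore forces $\min(m_i,k)=k$ with $K_i\ge kK_{\min}$. Conversely, if $kK_{\min}\le K_i<m_iK_{\min}$, then $k<m_i$, so $\min(m_i,k)=k$ and the condition holds. Hence, for amortizing $i$, the condition is equivalent to $k\le K_i/K_{\min}$. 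Since $k$ is an integer, this is $k\le\lfloor K_i/K_{\min}\rfloor$.

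Taking all amortizing types together, the cap removes amortization if and only if $k\le\min_{i\text{ amortizing}}\lfloor K_i/K_{\min}\rfloor=g^*-1$, with $g^*$ as defined in Proposition~\ref{prop:onset} for the original, uncapped catalogue. The hypothesis that amortization occurs without the cap ensures, by Proposition~\ref{prop:onset}, that the set of amortizing types is nonempty, so $g^*$ is defined.

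The main point needing care is the floor step, which relies on $k$ being an integer. A second point is that $g^*$ refers to the original catalogue, not the capped one. Nothing beyond Proposition~\ref{prop:onset} and elementary manipulation of $\min$ should be needed.
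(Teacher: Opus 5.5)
Your proposal is correct and follows the paper's proof essentially step for step. You apply the existence part of Proposition~\ref{prop:onset} to the capped catalogue, with capacities $\min(m_i,k)$ and unchanged $K_{\min}$, note that non-amortizing types stay non-amortizing, and show that an amortizing type stays amortizing exactly when $kK_{\min}>K_i$, so that integrality of $k$ turns the condition into $k\le\min_i\lfloor K_i/K_{\min}\rfloor=g^*-1$; you also make explicit two points the paper leaves implicit, namely that $\Cbind$ is unchanged by the cap and that $g^*$ refers to the uncapped catalogue.
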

	
	\emph{Proof idea.} A cap $k$ leaves $K_{\min}$ unchanged, and type $i$ remains amortizing exactly when $kK_{\min}>K_i$; no type does so exactly when $k\le g^*-1$.
	
	\begin{proposition}[Magnitude]
		\label{prop:error-bound}
		For every $g\ge1$,
		\[
		\frac{\Cbind(g)}{\Cbshared(g)}\ \le\ E:=\frac{K_{\min}+c}{\rho^*+c}.
		\]
		The bound is attained whenever $g$ is a multiple of $m_b$ for a type
		$b$ with $K_b/m_b=\rho^*$, and the ratio tends to $E$ as $g\to\infty$.
		Moreover, $E\le m_b$ for every such type $b$.
	\end{proposition}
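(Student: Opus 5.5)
The plan is to bound the shared cost from below with the per-identity floor, then show the bound is met exactly at multiples of $m_b$, and finally handle the limit and the comparison with $m_b$ separately.

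\textbf{Upper bound.} By Lemma~\ref{lem:floor}, every feasible combination for $g$ identities has instance cost at least $g\rho^*$. Hence
\[
\Cbshared(g)\ge g\rho^*+gc=g(\rho^*+c).
\]
This is positive whenever $\rho^*+c>0$, which holds because $K_i>0$ forces $\rho^*>0$. Dividing $\Cbind(g)=g(K_{\min}+c)$ by this lower bound gives the ratio bound $E$ for every $g$. The factor $g$ cancels, so no case analysis on $g$ is needed.

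\textbf{Attainment.} Fix a type $b$ with $K_b/m_b=\rho^*$ and let $g=jm_b$. Take $x_b=j$ and all other $x_i=0$. This combination is feasible and has instance cost $jK_b=g\rho^*$, which meets the floor. So $\Cbshared(g)=g(\rho^*+c)$ and the ratio equals $E$.

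\textbf{Limit.} For general $g$, use $x_b=\lceil g/m_b\rceil$. This gives
\[
g(\rho^*+c)\le \Cbshared(g)\le K_b\lceil g/m_b\rceil+gc.
\]
Divide by $g$ and use $\lceil g/m_b\rceil/g\to 1/m_b$. Both bounds then tend to $\rho^*+c$, so $\Cbshared(g)/g\to\rho^*+c$ and the ratio tends to $E$ by squeezing.

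\textbf{Comparison with $m_b$.} This is the step needing slight care, since $K_{\min}$ may come from a type other than $b$. Because $K_{\min}\le K_b=m_b\rho^*$, we have
\[
E\le \frac{m_b\rho^*+c}{\rho^*+c}.
\]
The right-hand side is at most $m_b$ because $c\le m_bc$, using $m_b\ge1$ and $c\ge0$. I expect no real obstacle beyond noting that the bound uses $K_{\min}\le K_b$ rather than equality. The only other point to watch is that positivity of $K_i$ guarantees the denominators are nonzero even when $c=0$.
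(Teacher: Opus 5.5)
Your proposal is correct and follows the paper's proof step for step: the per-identity floor gives the bound, $g/m_b$ instances of type $b$ give attainment, and the sandwich with $K_b\lceil g/m_b\rceil+gc$ gives the limit. The only difference is cosmetic and occurs in the final step, $E\le m_b$: the paper passes through $E\le K_{\min}/\rho^*=m_bK_{\min}/K_b$, whereas you first bound the numerator by $m_b\rho^*+c$ and then use $c\le m_bc$; both rely on $K_{\min}\le K_b$ and are equally short.
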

	
	\emph{Proof idea.} The floor gives $\Cbshared(g)\ge g(\rho^*+c)$, while $\Cbind(g)=g(K_{\min}+c)$; $g/m_b$ instances of a type with $K_b/m_b=\rho^*$ attain the floor.
	
	The factor $E$ is the largest error made by an estimate that prices
	every identity at the cheapest single-identity option, and it is reached
	at scale. For a single type it is $1/\alpha_m=(K+c)/(K/m+c)$. Whether
	the guarantee can fail, from which count onward, and by how much are
	thus determined by prices and capacities alone. These results concern
	provisioning cost; whether lower provisioning cost yields more
	influence depends on the mechanism, which Section~\ref{sec:influence}
	examines.
	
	\section{From Identity Count to Influence}
	\label{sec:influence}
	
	A failure of the per-identity cost guarantee becomes security-relevant
	when it lowers the cost of attaining a given level of influence, and
	that depends on the mechanism. In count-based mechanisms, additional
	identities may directly increase an entity's influence. In
	resource-based mechanisms, influence is determined by the resources an
	entity controls, while shared infrastructure can lower the cost of
	acquiring or operating them.
	
	\subsection{Cost of Attaining a Target Influence}
	\label{ssec:target-influence}
	
	For $X\in\{\mathrm{ind},\mathrm{shared}\}$, let $\Pi_X$ be the union of
	the strategy sets $\Pi_X(S)$ over all finite, nonempty $S$, so that each
	strategy $\pi$ determines its identity set $S(\pi)$, and let
	$I_{\M}(\pi)$ be the influence its configuration produces under
	mechanism $\M$. The minimum cost of attaining influence $v$ is
	\[
	\mathcal{K}^{\M}_X(v)
	=\inf\bigl\{C(\pi):\ \pi\in\Pi_X,\ I_{\M}(\pi)\ge v\bigr\},
	\]
	and $\mathcal{K}^{\M}_{\mathrm{shared}}(v)\le\mathcal{K}^{\M}_{\mathrm{ind}}(v)$
	because $\Pi_{\mathrm{ind}}\subseteq\Pi_{\mathrm{shared}}$. Where the
	independent cost is positive and finite, we write
	$\alpha_{\M}(v)=\mathcal{K}^{\M}_{\mathrm{shared}}(v)/\mathcal{K}^{\M}_{\mathrm{ind}}(v)$.
	This ratio fixes a mechanism-level outcome rather than a number of
	identities, so it accommodates mechanisms in which influence depends on
	identity count, on resource allocation, or on both, and the
	cost-minimizing strategies under the two arrangements may differ.
	
	\subsection{Count-Based Influence}
	\label{ssec:count-based}
	
	In a count-based mechanism, every provisioned identity is accepted and
	contributes $w>0$ units of influence, so reaching $v$ requires
	$q_{\M}(v)=\lceil v/w\rceil$ identities. When the minimum provisioning
	cost is nondecreasing in the number of identities, as it is in the
	models of Section~\ref{sec:capacity},
	\[
	\begin{gathered}
		\mathcal{K}^{\M}_{\mathrm{ind}}(v)=\Cbind\bigl(q_{\M}(v)\bigr),\\
		\mathcal{K}^{\M}_{\mathrm{shared}}(v)=\Cbshared\bigl(q_{\M}(v)\bigr),
	\end{gathered}
	\]
	so $\alpha_{\M}(v)=\alpha(q_{\M}(v))$. Identity amortization then lowers
	the cost of attaining the target exactly when it lowers the cost of the
	required identities, and for a fixed budget it buys more influence. A mechanism that counts identities
	as independent participants thus assigns additional influence without
	requiring a corresponding increase in independently provisioned
	resources. A phone number that a system permits to support several
	accounts illustrates this: the number's cost is shared, while
	identity-specific costs and acceptance rules remain separate.
	
	\subsection{Resource-Based Influence}
	\label{ssec:resource-based}
	
	In a resource-based mechanism, influence depends on an
	influence-bearing resource, such as computational power, and splitting
	a fixed quantity of it among more identities does not by itself raise
	influence. Identity amortization still occurs, in two places. First,
	the identities through which the resource is deployed may share
	provisioning costs: if the cost $R_{\M}(v)$ of the resource needed for
	$v$ is independent of how identities are provisioned, and the mechanism
	requires $q$ identities to deploy it, the cost of $v$ is
	$R_{\M}(v)+\bar C_X(q)$, and sharing lowers it by
	$\Cbind(q)-\Cbshared(q)$. Second, shared facilities and operations can
	lower the cost of acquiring or operating the influence-bearing resource
	itself, so that $R_{\M}(v)$ is lower for a participant who shares them
	than for one who provisions them alone. Neither reduction yields more
	influence per unit of resource, so the bridge to a Sybil threshold of
	Section~\ref{ssec:threshold} does not apply directly. Both, however,
	lower the cost of a given level of influence for participants who share
	infrastructure, and so favor concentration of influence among them.
	
	\section{The Security Consequences of Identity Amortization}
	\label{sec:security}
	
	This section derives the consequences of a failure of the per-identity
	cost guarantee for the cost of a Sybil threshold, for identity counts as
	indicators of independent participation, and for the design of
	defenses.
	
	\subsection{Cheaper Sybil Thresholds and Overestimated Attack Costs}
	\label{ssec:threshold}
	
	Consider a count-based mechanism in which each accepted identity
	receives one vote and every provisioned identity is accepted. With $N$
	honest identities and $g$ adversarial ones, the adversary's share is
	$g/(N+g)$, and reaching a share $\tau\in(0,1)$ requires
	\[
	g_\tau=\left\lceil\frac{\tau N}{1-\tau}\right\rceil
	\]
	identities. This translates an influence objective into a provisioning
	requirement, at minimum costs $\Cbind(g_\tau)$ and $\Cbshared(g_\tau)$
	(Section~\ref{ssec:count-based}). Whenever both are positive and finite,
	the \emph{attack-cost overestimation factor} is
	\[
	R_\tau=\frac{\Cbind(g_\tau)}{\Cbshared(g_\tau)}=\frac{1}{\alpha(g_\tau)}.
	\]
	It measures how much larger the estimated cost is when independent
	provisioning is assumed; it is a cost ratio, not a multiplier on the
	probability of attack success. $R_\tau>1$ exactly when the per-identity
	cost guarantee fails at $g_\tau$.
	
	\begin{proposition}[Bound on attack-cost overestimation]
		\label{prop:overestimation}
		In the capacity model of Section~\ref{ssec:capacity-model},
		\[
		R_\tau=\frac{g_\tau(K+c)}{K\lceil g_\tau/m\rceil+g_\tau c}
		\ \le\ \frac{K+c}{K/m+c}\ \le\ m.
		\]
		The first inequality is an equality if and only if $m$ divides
		$g_\tau$, and the second if and only if $m=1$ or $c=0$.
	\end{proposition}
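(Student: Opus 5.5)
The plan is to substitute the closed forms of Lemma~\ref{lem:capacity-costs} at $g=g_\tau$ and then compare two elementary ratios. Since $\tau\in(0,1)$ and $N\ge1$ is implicit, $g_\tau\ge1$. Lemma~\ref{lem:capacity-costs} gives $\Cbind(g_\tau)=g_\tau(K+c)$ and $\Cbshared(g_\tau)=K\lceil g_\tau/m\rceil+g_\tau c$. Both are positive and finite because $K>0$, so $R_\tau$ is defined and equals the displayed expression. The first inequality is then Proposition~\ref{prop:capacity-cost-reduction}(b) applied at $g=g_\tau$: $\alpha(g_\tau)\ge\alpha_m$, with equality exactly when $m\mid g_\tau$. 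Taking reciprocals gives $R_\tau=1/\alpha(g_\tau)\le 1/\alpha_m=(K+c)/(K/m+c)$, with the same equality condition. For a self-contained check one can also note $\lceil g_\tau/m\rceil\ge g_\tau/m$, with equality iff $m\mid g_\tau$, and divide by $g_\tau$.

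For the second inequality, since $K/m+c>0$, I would cross-multiply: $(K+c)\le m(K/m+c)=K+mc$ is equivalent to $c\le mc$, that is, $(m-1)c\ge0$. This holds because $m\ge1$ and $c\ge0$. Equality holds iff $(m-1)c=0$, i.e. iff $m=1$ or $c=0$. Cross-multiplication by a positive quantity preserves both the inequality and the equality case, so the condition is exact. Alternatively, this is the case $E\le m_b$ of Proposition~\ref{prop:error-bound}, but the direct computation also yields the equality characterization.

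I do not expect a genuine obstacle. The only point needing care is to keep the two equality characterizations separate. The first concerns divisibility of $g_\tau$ and is inherited from Proposition~\ref{prop:capacity-cost-reduction}(b). The second concerns only the parameters. Each inequality must also be confirmed to be taken between positive quantities so that reciprocals and cross-multiplication are legitimate, which follows from $K>0$.
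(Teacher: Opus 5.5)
Your proposal is correct and matches the paper's proof. The paper also uses $\lceil g_\tau/m\rceil\ge g_\tau/m$ for the first inequality and its divisibility equality case, which you take from Proposition~\ref{prop:capacity-cost-reduction}(b) and also check directly; your cross-multiplication $K+c\le K+mc$ is the paper's $K/m+c\ge(K+c)/m$, with equality exactly when $(m-1)c=0$.
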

	
	\emph{Proof idea.} Replacing $\lceil g_\tau/m\rceil$ by $g_\tau/m$ gives the first bound, and $K/m+c\ge(K+c)/m$ gives the second.
	
	A defense calibrated to an assumed cost per identity may therefore
	present a barrier up to $(K+c)/(K/m+c)$ times lower than intended, and up
	to $m$ times lower when identity-specific costs are negligible. For any
	resource catalogue, the costs are nondecreasing in $g$, so $R_\tau\le E$
	by Proposition~\ref{prop:error-bound}. Admission controls, detection, and
	other attack costs lie outside this comparison.
	
	\subsection{Scale and Apparent Participation}
	\label{ssec:scale}
	\label{ssec:decentralization}
	
	Once a shareable instance has been provisioned, further identities cost
	only $c$ each until its capacity is exhausted, whereas an independently
	provisioned identity costs $K+c$. Infrastructure already paid for is
	therefore latent identity capacity, and an entity that holds it can
	expand its identity set more cheaply than one that must provision each
	identity independently; the cost of a new instance recurs only at
	capacity boundaries (Section~\ref{ssec:capacity-model}). For an
	adversary who already holds spare capacity, the nominal price overstates
	each further identity that fits in it by $(K+c)/c$ when $c>0$, and this
	factor exceeds $m$ whenever $K>(m-1)c$
	(Proposition~\ref{prop:marginal}). Where
	additional accepted identities translate into influence, this lowers
	the cost of scaling adversarial participation.
	
	For the same reason, a population of visible identities can be cheaper
	to produce than an equally large population of independently
	provisioned participants: in the capacity model, by $\Delta(q)$ for $q$
	identities. A large number of visible participants therefore need not
	reflect an equally large number of independently provisioned resource
	bases. Identity counts, and any metric computed over declared
	identities, including resource shares attributed to declared identities
	rather than to the entities that provision them, can overstate the
	independence of the participation they represent.
	
	\subsection{Regressive Identity Pricing and Defense Design}
	\label{ssec:regressive}
	
	The same nominal requirement can impose different effective costs on
	different actors. A participant who establishes one identity without
	existing infrastructure pays the full instance cost $K$ together with
	$c$; an entity that has already provisioned a shareable instance pays
	only $c$ for each further identity while spare capacity remains. In the
	capacity model, the honest participant and the
	adversary buy the same type: an increase in $K$ is borne in full by a
	participant with one identity, but only about one $m$-th per identity by
	an entity that spreads it across many, so the ratio $(K+c)/(K/m+c)$
	rises with $K$ when $m>1$ and $c>0$, and equals $m$ for every $K$ when
	$c=0$; the absolute per-identity gap $K(1-1/m)$ widens with $K$. An
	increase in the identity-specific
	cost $c$, by contrast, is paid for every identity and moves the ratio
	toward one.
	
	With several resource types, the two need not buy the same type, and the
	adversary re-optimizes after any change. Consider an honest participant
	who provisions one identity, at cost $H=K_{\min}+c$, and an adversary who
	provisions $g$ identities at cost $\Cbshared(g)$ in a resource catalogue.
	
	\begin{proposition}[Defensive interventions]
		\label{prop:interventions}
		Let $\delta>0$.
		\begin{enumerate}[label=(\alph*),leftmargin=*]
			\item Raising $c$ by $\delta$ raises $\Cbshared(g)$ by exactly
			$g\delta$ and $H$ by $\delta$; if $\rho^*<K_{\min}$, it strictly
			decreases $E$.
			\item Raising the price $K_j$ of one type by $\delta$ raises
			$\Cbshared(g)$ by at most $\delta x_j$, where $x$ is any combination
			that was optimal before the change. It raises $H$ only if $j$ is the
			unique type attaining $K_{\min}$, and then by
			$\min(\delta,K'-K_j)$, where $K'$ is the cheapest price among the
			other types ($K'=+\infty$ if there is none).
			\item The error factor $E=(K_{\min}+c)/(\rho^*+c)$ increases with
			$K_{\min}$ and decreases with $\rho^*$.
			\item A sharing cap $k\ge1$ leaves $H$ unchanged and, by
			Corollary~\ref{cor:safe-cap}, removes identity amortization for
			every $g$ if and only if $k\le g^*-1$.
		\end{enumerate}
	\end{proposition}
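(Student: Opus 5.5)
The plan is to prove the four parts separately. Each reduces to the integer program defining $\Cbshared(g)$ together with the closed forms $H=K_{\min}+c$ and $E=(K_{\min}+c)/(\rho^*+c)$. Throughout, write $F(g)=\min\{\sum_i x_iK_i:\ x\in\mathbb{N}^n,\ \sum_i x_im_i\ge g\}$, so that $\Cbshared(g)=F(g)+gc$. This minimum is attained, because only finitely many $x$ with $\sum_i x_iK_i\le gK_{\min}$ matter.

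For (a), the feasible set and the objective $\sum_i x_iK_i$ do not involve $c$. Hence $F(g)$ is unchanged, and $\Cbshared(g)$ rises by exactly $g\delta$. $H$ rises by $\delta$ directly. For $E$, write $E(c)=(K_{\min}+c)/(\rho^*+c)$ and compute
\[
\frac{\partial E}{\partial c}=\frac{\rho^*-K_{\min}}{(\rho^*+c)^2}<0
\qquad\text{when } \rho^*<K_{\min}.
\]
Alternatively, one can note directly that $E(c)-1=(K_{\min}-\rho^*)/(\rho^*+c)$ is strictly decreasing in $c$. Part (c) is the same elementary monotonicity: $E$ is increasing in the numerator term $K_{\min}$ and decreasing in the denominator term $\rho^*$, since all quantities are positive. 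I would add the remark that the two parameters are linked through the constraint $\rho^*\le K_{\min}$, so (c) is read as a statement about partial dependence.

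For (b), I will use a feasibility argument. Take any $x$ optimal before the change. It is still feasible afterwards, because capacities are untouched, and its new cost is its old cost plus $\delta x_j$. The new minimum is at most this, and the $gc$ term is unchanged, which gives the bound. The new optimum is also no smaller than the old one, since all prices weakly rose; this is not needed but worth noting. For $H$, the new cheapest single-identity price is $K'_{\min}=\min(K_j+\delta,\ \min_{i\ne j}K_i)$, and I will split into three cases.
\begin{itemize}
\item If some $i\ne j$ has $K_i=K_{\min}$, then $K'_{\min}=K_{\min}$ and $H$ is unchanged.
\item If $K_j>K_{\min}$, then again $K'_{\min}=K_{\min}$ and $H$ is unchanged.
\item If $j$ is the unique minimizer, then $K_{\min}=K_j<K'$ and $K'_{\min}=\min(K_j+\delta,K')$. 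The increase is $\min(\delta,K'-K_j)$, which is strictly positive, with $K'=+\infty$ when $n=1$.
\end{itemize}

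For (d), a cap replaces $m_i$ by $\min(m_i,k)\ge1$ and leaves every $K_i$ unchanged, so $K_{\min}$ and therefore $H$ are unchanged. The equivalence is exactly Corollary~\ref{cor:safe-cap}, provided the catalogue exhibits amortization before the cap. Otherwise capping cannot create amortization: the capped $\rho^*$ can only rise, since $K_i/\min(m_i,k)\ge K_i/m_i$, so by Proposition~\ref{prop:onset} no $k$ creates it, and the ``if and only if'' reading should be restricted to the corollary's hypothesis.

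The main obstacle is (b). I need to make sure the minimum in $F$ is attained, so that ``any optimal combination'' exists, and to handle the uniqueness case of $H$ carefully, including the tie and single-type edge cases. Everything else is routine algebra.
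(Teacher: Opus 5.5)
Your proposal is correct and matches the paper's proof step for step: the $c$-independence of the instance-cost program and the sign of $\partial E/\partial c$ for (a), feasibility of the old optimum plus the unique-minimizer case split for (b), direct monotonicity for (c), and unchanged prices plus Corollary~\ref{cor:safe-cap} for (d). You also add two points the paper leaves implicit, both sound: why the minimum is attained, and that (d) holds only under the corollary's hypothesis, namely that amortization is present before the cap.
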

	
	The proposition locates the regressive effect. Raising the price of the
	cheapest single-identity resource, when the adversary's cheapest
	per-identity resource is a different type, raises the honest cost while
	leaving $\rho^*$, and possibly $\Cbshared(g)$, unchanged, so the error
	factor grows. Raising the price of a type that only the adversary uses
	raises its cost by at most $\delta x_j$ at no cost to $H$. A sharing cap
	below the onset removes the reduction without changing the price paid by
	a participant with one identity. These statements concern provisioning
	cost only: enforcing a cap requires linking identities to the instance
	that supports them, and legitimate participants who need several
	identities, such as several accounts on one phone number, bear part of
	its cost; neither effect is modeled here. A defense that relies on the
	cost of identity creation should therefore ask not only how expensive a
	resource is, but how much of that cost must be paid separately for
	every identity, so that the cost of a shared resource is not mistaken
	for a cost that an adversary must pay again for each identity.
	
	\section{Model-Based Evaluation on Phone Verification}
	\label{sec:experiments}
	
	This section instantiates the catalogue results of
	Section~\ref{ssec:catalogue} and the interventions of
	Proposition~\ref{prop:interventions} on phone verification, using only
	prices and observations published by Thomas et
	al.~\cite{thomas2014dialing} for Google phone-verified accounts. The
	minima are computed exactly by dynamic programming and compared with the
	propositions. The calculations are model-based; their inputs are
	published measurements, and costs are in US dollars per account.
	
	\subsection{Catalogue}
	\label{ssec:eval-setup}
	
	Thomas et al.~\cite{thomas2014dialing} are used because, to our
	knowledge, that study is the only peer-reviewed source that reports
	code prices, SIM prices, observed reuse, and identity-specific costs
	as one matched set for a deployed system. The purpose is to test the
	decision procedure on those inputs, not to estimate current costs.
	Thomas et al.\ report two ways of obtaining a verification for an
	account~\cite[\S4.1.2]{thomas2014dialing}: verification codes sold as a
	service, at \$140 per thousand codes, and bulk SIM cards, at \$140--420
	per thousand. A code verifies one account. A number is not limited to
	one account: phone clusters of size five or greater contained $58\%$ of
	abusive accounts~\cite[\S4.2.1]{thomas2014dialing}, so we take five as
	the capacity of a SIM card. This is a lower bound on what was observed,
	not the provider's limit, which the study does not report. The
	identity-specific cost is $c=\$0.0057$: a human-solved CAPTCHA at \$0.70
	per thousand and an email account at \$5 per
	thousand~\cite[\S2.2]{thomas2014dialing}. The catalogue is therefore
	\[
	(K_i,m_i)\in\{(\$0.14,1),\ (K_s,5)\},
	\]
	with $K_s\in\{\$0.14,\$0.42\}$,
	the two ends of the reported SIM price range. Free VOIP numbers and
	proxies are left out, because the study gives no comparable price per
	account for them. The honest participant needs one account, at
	$H=K_{\min}+c=\$0.1457$.
	
	\subsection{Existence, Onset, and Magnitude}
	\label{ssec:eval-catalogue}
	
	\begin{table}[t]
		\caption{Phone-verification catalogue built from published
			prices~\cite{thomas2014dialing}: onset $g^*$, safe sharing cap
			$k_{\mathrm{safe}}$, and error factor $E$ of a per-account estimate.}
		\label{tab:phone}
		\centering
		\begin{tabular}{lrrr}
			\hline
			SIM price $K_s$ & $g^*$ & $k_{\mathrm{safe}}$ & $E$\\
			\hline
			\$0.14 & $2$ & $1$ & $4.32$\\
			\$0.42 & $4$ & $3$ & $1.62$\\
			\hline
		\end{tabular}
	\end{table}
	
	Here $K_{\min}=\$0.14$, and a SIM card is an amortizing type at both
	prices, since $K_s<5K_{\min}$. Table~\ref{tab:phone} gives the results.
	Proposition~\ref{prop:onset} predicts the onset
	$g^*=\lfloor K_s/K_{\min}\rfloor+1$, that is, $2$ and $4$, and the
	dynamic program finds the first strict reduction at exactly these
	counts; below $g^*$ no mix that shares a SIM card is cheaper than one code per account.
	Proposition~\ref{prop:error-bound} gives
	$E=(K_{\min}+c)/(K_s/5+c)$, that is, $4.32$ and $1.62$. Both values are
	attained at $g=60$, a multiple of the capacity: there the shared cost is
	\$2.02 and \$5.38 against an independent cost of \$8.74. A defense that
	prices every account at one number therefore overestimates the
	adversary's provisioning cost by a factor between $1.6$ and $4.3$. By
	Corollary~\ref{cor:safe-cap}, a limit of one account per number removes
	the reduction at the lower SIM price, and a limit of three at the higher.
	
	The published market prices are consistent with this. The cheapest
	phone-verified Google accounts sold for \$85 per
	thousand~\cite[\S3.1.2]{thomas2014dialing}, that is, \$0.085 per account.
	This is below $H=\$0.1457$, the least cost of provisioning one account
	with its own number in this catalogue, and lies between the shared
	floors $\rho^*+c$ of the two catalogues, \$0.034 at $K_s=\$0.14$ and
	\$0.090 at $K_s=\$0.42$. A sale below the independent cost
	is possible only if the cost of a number is spread over several accounts,
	or if a cheaper resource is used. The study also reports a market-wide
	price drop of $30$--$40\%$ while free VOIP numbers were heavily used,
	reversed after Google penalized registrations from frequently abused
	VOIP providers~\cite[\S3.1.2, \S4.1.2]{thomas2014dialing}. This, too, is
	consistent with a cheaper shareable resource lowering the price per
	account. 	These observations do not establish the cause of the prices.
	
	\subsection{Interventions}
	\label{ssec:eval-interventions}
	
	\begin{table}[t]
		\caption{Interventions on the phone catalogue with $K_s=\$0.42$ and
			$g=60$, the adversary re-optimizing after each change: change in
			the adversary's cost $\Delta_A$ and in the honest cost $\Delta H$
			(US dollars), and the error factor $E$.}
		\label{tab:interventions}
		\centering
		\begin{tabular}{lrrr}
			\hline
			Intervention & $\Delta_A$ & $\Delta H$ & $E$\\
			\hline
			None & -- & -- & $1.62$\\
			$c+\$0.01$ & $0.60$ & $0.01$ & $1.56$\\
			code price $+\$0.01$ & $0$ & $0.01$ & $1.74$\\
			SIM price $+\$0.01$ & $0.12$ & $0$ & $1.59$\\
			limit $4$ per number & $1.26$ & $0$ & $1.32$\\
			limit $3$ per number & $3.36$ & $0$ & $1.00$\\
			\hline
		\end{tabular}
	\end{table}
	
	Table~\ref{tab:interventions} applies each intervention once at
	$K_s=\$0.42$, where the honest participant buys a code and the
	adversary buys SIM cards. Each row follows
	Proposition~\ref{prop:interventions}. A higher identity-specific cost
	raises the adversary's cost by exactly $g\delta=\$0.60$ and the honest
	cost by $\delta$. A higher code price raises only the honest cost and the
	error factor: this is the regressive case of
	Section~\ref{ssec:regressive}. A higher SIM price raises the adversary's
	cost by $\delta x_j=\$0.12$, since it uses $12$ cards, at no cost to the
	honest participant. A limit of four accounts per number still leaves
	amortization from four accounts on; a limit of three, the safe cap,
	removes it, and the adversary pays the independent cost of \$8.74.
	Reaching an error factor of $1.2$ instead requires $c\ge\$0.196$
	(Section~\ref{ssec:eval-scope}) and raises the honest cost from
	$\$0.1457$ to $\$0.336$, about $2.3$ times, whereas the limit of three
	leaves that cost unchanged. Raising $c$ charges every participant with
	one account. The limit does not, but it falls in part on someone who
	legitimately keeps several accounts on one number, and that part is
	not modeled here.
	
	\subsection{Sensitivity, Current Prices, and Scope}
	\label{ssec:eval-scope}
	
	The results depend on the published prices, which date from
	2013--2014, on the capacity of five, which is a lower bound on observed
	reuse rather than a stated limit, and on the omission of free VOIP
	numbers and proxies. They illustrate the propositions on measured
	inputs; they are not a measurement of attack costs, and they do not
	establish what a deployed admission rule imposes today. A current
	code price, SIM price, identity-specific cost, and observed reuse are
	not available as one matched set. We therefore keep the 2013--2014
	catalogue as the reference and check it below against a current code
	price, replacing only the input for which a current published value
	exists.
	
	On these prices the onset does not depend on $c$, since $c$ cancels in
	Proposition~\ref{prop:onset}. At $K_s=\$0.42$ a SIM card is amortizing
	only if its capacity $m$ satisfies $K_s<mK_{\min}$, that is,
	$m\ge4$. At capacity $3$ the two resources have the same cost per
	account, the error factor is $1$, and the reduction disappears; at
	capacity $4$ the error factor is $1.32$. At $K_s=\$0.14$ the SIM card
	is amortizing for every capacity of at least $2$, and at capacity $2$
	the error factor is $1.92$. Bringing the error factor at $K_s=\$0.42$
	and capacity $5$ down from its published value to $1.2$ requires
	$c\ge\$0.196$, against the published $\$0.0057$. The reduction is
	therefore stable under a small change in the identity-specific cost.
	It depends on whether a number supports at least four accounts when the
	SIM card is at the high published price.
	
	\begin{table}[t]
		\caption{Phone catalogue with current verification prices
			(COTSI, July 2024--July 2025, average over
			platforms~\cite{dek2025manipulation,cam2025botarmy}). SIM price,
			capacity $5$, and $c=\$0.0057$ are from~\cite{thomas2014dialing}:
			a composite catalogue, not a matched set.}
		\label{tab:phone-current}
		\centering
		\begin{tabular}{llrrrr}
			\hline
			Country & $K_s$ & $K_{\min}$ & $g^*$ & $k_{\mathrm{safe}}$ & $E$\\
			\hline
			US (\$0.26) & \$0.14 & \$0.14 & $2$ & $1$ & $4.32$\\
			US (\$0.26) & \$0.42 & \$0.26 & $2$ & $1$ & $2.96$\\
			UK (\$0.10) & \$0.14 & \$0.10 & $2$ & $1$ & $3.14$\\
			UK (\$0.10) & \$0.42 & \$0.10 & $5$ & $4$ & $1.18$\\
			\hline
		\end{tabular}
	\end{table}
	
	Table~\ref{tab:phone-current} replaces only the code price. Where a
	verification is more expensive, sharing a SIM card pays from two
	accounts on even at the higher SIM price; where it is cheaper, a SIM
	card at \$0.42 pays only from five accounts on. When the SIM card is
	the cheaper resource, it is also the cheapest single-identity option,
	and $K_{\min}=K_s$. The error factor lies between $1.18$ and $4.32$.
	Google states that it limits the number of accounts each phone number
	can create, without publishing the limit~\cite{googleverify}. With these
	prices, removing the reduction would require a limit of one account per
	number in the US, and of one to four in the UK.
	
	\section{Case Study: Relays per Address in Tor}
	\label{sec:tor}
	
	This section applies the capacity model of Section~\ref{ssec:capacity-model}
	and the threshold analysis of Section~\ref{ssec:threshold}, without
	modification, to a deployed Sybil defense whose sharing limit has been
	changed in practice. Its inputs are measured network data; its outputs
	are model-based costs.
	
	\subsection{Setting}
	\label{ssec:tor-setting}
	
	Tor~\cite{dingledine2004tor} relays are identities in the sense of
	Definition~\ref{def:identity}: each is listed separately in the network
	consensus. To raise the number of addresses an attacker must control,
	the directory authorities accept only a limited number of relays per
	IPv4 address~\cite{torprop109}. In the terms of
	Section~\ref{ssec:capacity-model}, an IPv4 address is a shareable
	resource instance and the limit is its capacity $m$. The limit was $2$
	until 2~February~2023, was raised to $4$, and was raised to $8$ at the
	end of June~2023~\cite{tor2023cap4,tor2023cap8}. We use the
	hidden-service directory (HSDir) role, in which positions are assigned
	per relay rather than per unit of bandwidth~\cite{torrendspec}, so that
	influence is count-based in the sense of Section~\ref{ssec:count-based}.
	Roles that are selected in proportion to bandwidth are resource-based
	(Section~\ref{ssec:resource-based}) and are not analyzed here.
	
	\subsection{Data and Model}
	\label{ssec:tor-data}
	
	The honest population is the median daily number of relays with the
	HSDir flag over the fourteen days before 2~February~2023 (19~January to
	1~February, all days present), taken from the public Tor Metrics
	data~\cite{tormetrics}: $N=2814$. One consensus from each period,
	obtained from the CollecTor archive~\cite{torcollector}, confirms the
	limits in force: at most $2$ relays per IPv4 address on 31~January,
	$4$ on 15~March, and $8$ on 15~July~2023. Only aggregate counts are
	used.
	
	For a target share $\tau$, the adversary needs $g_\tau$ relays
	(Section~\ref{ssec:threshold}), and under limit $k$ its minimum cost is
	$A(k)=K\lceil g_\tau/k\rceil+g_\tau c$
	(Lemma~\ref{lem:capacity-costs}), where $K$ is the cost of an address and
	$c$ the cost of operating one relay. The independent baseline is $k=1$,
	one address per relay, so that $A(1)/A(k)=R_\tau$. As in the model, the
	honest population is held at its pre-change size. We found no
	documented per-relay cost in the same unit as an address price, so
	costs are reported in units of $c$ as functions of $r=K/c$ rather than
	at an assumed price.
	
	\subsection{Results}
	\label{ssec:tor-results}
	
	With $\tau=1/4$, the adversary needs $g_\tau=938$ relays. The number
	of addresses this requires falls from $469$ under a limit of $2$ to
	$235$ under $4$ and $118$ under $8$. Table~\ref{tab:tor} gives the
	resulting costs, and Fig.~\ref{fig:tor} shows the reduction relative to
	a limit of $2$ over $0.25\le r\le 50$.
	
	\begin{table*}[t]
		\caption{Tor HSDir threshold with measured network size $N=2814$ and
			$\tau=1/4$ ($g_\tau=938$ relays): addresses required under each
			per-address limit $k$, reduction (red.) in attack cost relative to
			$k=2$, and overestimation factor $R_\tau=A(1)/A(k)$, for three
			price ratios $r=K/c$.}
		\label{tab:tor}
		\centering\small
		\begin{tabular}{rrrrrrrr}
			\hline
			& & \multicolumn{2}{c}{$r=1$} & \multicolumn{2}{c}{$r=5$} & \multicolumn{2}{c}{$r=20$}\\
			$k$ & addresses & red.\ (\%) & $R_\tau$ & red.\ (\%) & $R_\tau$ & red.\ (\%) & $R_\tau$\\
			\hline
			$1$ & $938$ & -- & $1.00$ & -- & $1.00$ & -- & $1.00$\\
			$2$ & $469$ & $0$ & $1.33$ & $0$ & $1.71$ & $0$ & $1.91$\\
			$4$ & $235$ & $16.6$ & $1.60$ & $35.6$ & $2.66$ & $45.4$ & $3.49$\\
			$8$ & $118$ & $24.9$ & $1.78$ & $53.5$ & $3.68$ & $68.0$ & $5.97$\\
			\hline
		\end{tabular}
	\end{table*}
	
	\begin{figure}[t]
		\centering
		\includegraphics[width=\columnwidth]{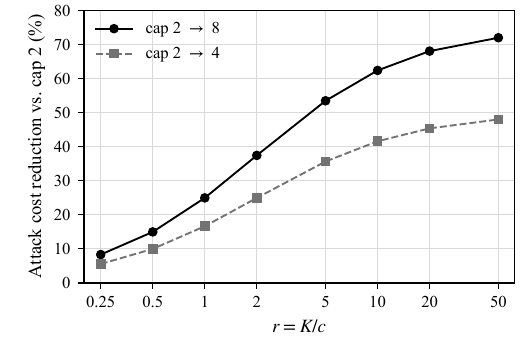}
		\caption{Reduction in the cost of reaching a quarter of the HSDir
			relays, relative to a limit of two relays per address, as a function
			of the price ratio $r=K/c$, with the measured network size $N=2814$.
			Values for $\tau=1/10$ and $\tau=1/2$ differ from those shown by less
			than $0.5$ percentage points. Both curves lie within $0.45$
			percentage points of the large-population limits in
			Proposition~\ref{prop:cap-limit}.}
		\label{fig:tor}
	\end{figure}
	
	Raising the limit from $2$ to $8$ lowers the cost of reaching the
	threshold by $24.9\%$, $53.5\%$, and $68.0\%$ at $r=1$, $5$, and $20$,
	and by between $8.3\%$ and $72.0\%$ over the whole range considered.
	Proposition~\ref{prop:cap-limit} gives the large-population limit
	$3r/(4(r+2))$ for this change, and $r/(2(r+2))$ for a rise from $2$ to
	$4$. At $g_\tau=938$ the exact reduction from $2$ to $8$ differs from
	its limit by $0.05$, $0.11$, and $0.15$ percentage points at the same
	three ratios. The limit equals one half at $r=4$. Over
	$\tau\in\{1/10,1/4,1/2\}$ and every ratio plotted in
	Fig.~\ref{fig:tor}, for both changes, the exact reduction differs from
	the limit by at most $0.45$ percentage points; the three shares differ
	from one another by less than $0.5$ percentage points. An
	estimate that assumes one address per relay overstates the cost under
	a limit of $8$ by a factor of $1.78$, $3.68$, and $5.97$ at $r=1$, $5$,
	and $20$; each value is at most the bound $(K+c)/(K/m+c)$ of
	Proposition~\ref{prop:overestimation}. At $r=20$ that factor is $5.97$,
	while one further relay placed in spare capacity is overestimated by
	$(K+c)/c=21$, which is greater than the capacity $8$
	(Proposition~\ref{prop:marginal}). The reduction grows with $r$, as
	Proposition~\ref{prop:cap-limit} states, that is, with the share of a
	relay's cost that lies in its address. Within the
	model, the scarcity of IPv4 addresses that motivated raising the
	limit~\cite{tor2023cap4,nothingtohide2023relays} is therefore the condition under which the
	change lowers the attack cost most.
	
	\subsection{Population Growth and Scope}
	\label{ssec:tor-scope}
	
	These are model-based costs on measured inputs, not measurements of an
	attack, and they do not establish that the limit was exploited. The
	reference calculation holds the honest population at its pre-change
	median. The count includes any adversarial relays already present. A
	single price per address is assumed, the adversary is assumed to obtain
	addresses freely, and bandwidth requirements, detection, and other
	admission rules are outside the comparison.
	
	The same cost formula shows how large a later population must be before
	the rise from a limit of $2$ to a limit of $8$ stops lowering the model
	cost of the quarter-share. For a large population that factor is
	$4(r+2)/(r+8)$: $4/3$, $28/13$, and $22/7$ at $r=1$, $5$, and $20$. At
	the measured February size, the cost under limit $8$ is no longer lower
	than the February cost under limit $2$ once the honest population
	reaches $3748$, $6052$, and $8833$ at those three ratios. The HSDir
	median over the fourteen days before 28~June~2023 (14--27~June, all days
	present) is $3866.5$. This count precedes the second change and is not
	attributed to it. After the limit had become $8$, the median over
	15--28~July, all fourteen days present, is $3921$, and over
	18--31~August it is $4336.5$. Relative to $2814$ these later medians are
	factors $1.39$ and $1.54$. They are not attributed to the limit either.
	Substituting either one for the February population, the cost under
	limit $8$ is higher than the February cost under limit $2$ at $r=1$, by
	$4.5\%$ and $15.6\%$, and it is still lower at $r=5$, by $35.2\%$ and
	$28.4\%$, and at $r=20$, by $55.5\%$ and $50.9\%$. At the later
	measured sizes, the cost under limit $8$ exceeds the February cost
	under limit $2$ only near $r=1$.
	
	\section{Discussion and Limitations}
	\label{sec:discussion}
	
	\subsection{Implications for System Design}
	\label{ssec:disc-design}
	\label{ssec:disc-interpretation}
	
	The cost that a requirement is intended to impose per identity need not
	equal the cost an adversary incurs to provision each identity. A
	mechanism may assign influence to each accepted identity while allowing
	the resources that support those identities to be shared. Identity
	amortization is therefore not a property of a resource in isolation,
	but of the interaction between sharing opportunities, provisioning
	requirements, and the cost model. Three implications for design follow.
	
	First, designers should identify the unit that a mechanism counts.
	Whether additional identities matter depends on whether influence is
	assigned per identity, per controlling entity, or per unit of some
	resource, so identity count alone does not characterize an adversary's
	influence.
	
	Second, designers should separate shareable costs from
	identity-specific ones and ask which resources honest participants and
	adversaries actually buy. Raising an identity-specific cost is paid on
	every identity. Raising a resource price falls on the adversary only in
	proportion to the instances it uses, and is regressive when the honest
	participant buys a different resource. A sharing cap below the onset
	removes the reduction without changing the single-identity price. None
	of these is universally preferable: admission, enforcement, and
	accessibility costs, and legitimate participants who need several
	identities, also matter.
	
	Third, identity counts should not be read as counts of independent
	participants. Sharing may, however, also help defenders: where shared
	resources are observable, they can support clustering of identities,
	as limits on accounts per phone number do~\cite{thomas2014dialing},
	although shared provisioning does not guarantee that the relationship
	is visible or that clusters reflect common control. The arrangement
	that lowers provisioning cost may thus also create signals that a
	defense can use.
	
	\subsection{Scope and Limitations}
	\label{ssec:disc-limitations}
	
	The analysis gives conditional results within an explicit provisioning
	model; it does not measure how often identity amortization occurs in
	deployed systems. Costs are in one model unit, the comparison is static
	(Section~\ref{ssec:def}), and
	the results depend on the independent baseline and on the granularity
	of resource instances. The catalogue results assume linear instance
	prices, whereas the general model admits bulk discounts, and the
	intervention comparison considers an honest participant with one
	identity. Identity-verification and admission costs, which may add
	identity-specific requirements, are not modeled. The two evaluations apply the model to published or measured
	inputs but do not measure attack costs. The phone catalogue uses prices
	from 2013--2014; Section~\ref{ssec:eval-scope} varies the capacity and
	the identity-specific cost on those prices and replaces only the code
	price with a current, platform-averaged value. The Tor case study reports costs at the pre-change
	population, and Section~\ref{ssec:tor-scope} compares them with later
	measured populations without attributing the change in population to
	the limit. The price ratio $r$ remains undocumented. The paper does not establish that every system with
	shareable resources is vulnerable to a successful Sybil attack.
	
	\subsection{Future Directions}
	\label{ssec:disc-future}
	
	A first direction is a matched current catalogue, with a code price, a
	SIM price, an identity-specific cost, and an observed reuse taken from
	one source, together with the per-number limits that services enforce
	and the costs adversaries incur. A second is a model of how honest
	operators add identities when a limit is raised. The later HSDir counts
	in Section~\ref{ssec:tor-scope} are not that model. Further
	directions are sharing across services, where a resource provisioned
	for one service also supports identities in others, which a
	single-system model does not capture; whether observable sharing is
	reliable evidence of common control, and under what conditions it can
	support clustering without merging independent participants; and the
	formal relation to false-name-proof mechanisms
	(Section~\ref{ssec:rw-positioning}), in which assumptions about the
	cost of additional identities affect the robustness of outcomes.
	
	\section{Conclusion}
	\label{sec:conclusion}
	
	A resource requirement intended to impose a cost per identity need not
	be paid per identity, and whether it is can be decided from prices and
	capacities alone. We defined identity amortization as the failure of
	the per-identity cost guarantee on which resource-based Sybil defenses
	rest, gave a test for when it occurs in any catalogue of resource types,
	the number of identities at which it begins, the largest error it
	causes, and the sharing cap that removes it, and compared defensive
	interventions against a re-optimizing adversary. On published
	phone-verification prices, a per-account estimate overstated the
	adversary's provisioning cost by a factor between $1.6$ and $4.3$, and
	the cheapest reported market price was below the least cost of
	verifying each account with its own number; the first strict
	reduction occurred where predicted, raising the price of the honest
	participant's resource left the adversary's cost unchanged, and a limit
	of three accounts per number removed the reduction. Bringing the error
	down to $1.2$ through the identity-specific cost instead would have
	raised the cost of an honest account $2.3$ times; the limit leaves it
	unchanged. For Tor, raising the limit
	on relays per address from two to eight lowered the modeled cost of
	reaching a quarter of the hidden-service directories by between $8\%$
	and $72\%$. These values lie within half a percentage point of the
	large-population limit, which depends on the price ratio and the two
	limits but not on the target share. Identity amortization does not by itself establish a
	successful Sybil attack, but evaluating Sybil resistance requires
	examining not only the nominal cost of a resource but also how much of
	it must be paid separately for each identity.
	
	\appendices
\section{Proofs}
\label{app:proofs}
	This appendix proves the results of Sections~\ref{sec:capacity} and~\ref{sec:security}; notation is as in the main text.
	
	\subsection*{Proof of Lemma~\ref{lem:capacity-costs} (Costs in the capacity model)}
	\begin{proof}
		Under independent provisioning every identity requires its own instance,
		so exactly $g$ instances are needed. Under shared provisioning, $g$
		identities on instances of capacity $m$ require at least $\lceil g/m\rceil$
		instances, and $\lceil g/m\rceil$ instances suffice by filling them in
		turn. The identity-specific cost $gc$ is the same in both arrangements.
		Every set of $g$ identities has the same cost, so these values are the
		count-level infima, each achieved by an explicit strategy.
	\end{proof}
	
	\subsection*{Proof of Proposition~\ref{prop:capacity-cost-reduction} (Cost reduction in the capacity model)}
	\begin{proof}
		(a) Subtracting the two costs, the term $gc$ cancels and
		$\Delta(g)=K(g-\lceil g/m\rceil)$. Because $m\ge1$, $\lceil g/m\rceil\le g$,
		so $\Delta(g)\ge0$; it is strictly positive exactly when
		$\lceil g/m\rceil<g$, which for integers $g,m\ge1$ holds exactly when $m>1$
		and $g>1$.
		(b) Dividing the two costs gives
		$\alpha(g)=(K\lceil g/m\rceil+gc)/(g(K+c))$. Since
		$\lceil g/m\rceil/g\ge1/m$, with equality exactly when $m$ divides $g$, and
		$\lceil g/m\rceil/g\to1/m$ as $g\to\infty$, the bound, the equality
		condition, and the limit follow.
	\end{proof}
	
	\subsection*{Proof of Proposition~\ref{prop:marginal} (Marginal overestimation)}
	\begin{proof}
		The shared cost of $g$ identities is $K\lceil g/m\rceil+gc$, so one
		further identity costs $K\bigl(\lceil(g+1)/m\rceil-\lceil g/m\rceil\bigr)+c$.
		For an integer $g$, the difference of ceilings is $0$ when $m\nmid g$ and
		$1$ when $m\mid g$. The independent cost of one identity is $K+c$, and
		dividing gives the two ratios. For $c>0$, $(K+c)/c>m$ if and only if
		$K>(m-1)c$.
	\end{proof}
	
	\subsection*{Proof of Proposition~\ref{prop:cap-limit} (Large-population cap increase)}
	\begin{proof}
		The shared costs at capacities $\ell$ and $k$ are $K\lceil g/\ell\rceil+gc$
		and $K\lceil g/k\rceil+gc$. Dividing both by $g$, and using
		$\lceil g/t\rceil/g\to1/t$ for each fixed integer $t\ge1$, the ratio of
		the two costs tends to $(K/\ell+c)/(K/k+c)$. If $c=0$, this limit is
		$k/\ell$, so the relative reduction tends to $(\ell-k)/\ell$. If $c>0$,
		the relative reduction tends to
		\[
		1-\frac{K/\ell+c}{K/k+c}
		=\frac{K(\ell-k)}{\ell(K+kc)}
		=\frac{r(\ell-k)}{\ell(r+k)}.
		\]
		The map $r\mapsto r/(r+k)$ has derivative $k/(r+k)^2>0$ for $r>0$, so it
		is strictly increasing, and its values lie in $(0,1)$. The stated limit
		therefore lies strictly between $0$ and $(\ell-k)/\ell$.
	\end{proof}
	
	\subsection*{Proof of Proposition~\ref{prop:onset} (Existence and onset)}
	\begin{proof}
		If $\rho^*=K_{\min}$, the per-identity floor gives an instance cost of at
		least $gK_{\min}$, so $\Cbshared(g)\ge g(K_{\min}+c)=\Cbind(g)$ for all
		$g$. Otherwise some type $i$ is amortizing. For such a type,
		$K_i/K_{\min}<m_i$ implies $g_i:=\lfloor K_i/K_{\min}\rfloor+1\le m_i$, so a
		single instance of type $i$ supports $g_i$ identities at instance cost
		$K_i<g_iK_{\min}$.
		
		Let $g$ be such that $K_i\ge gK_{\min}$ for every amortizing type. An
		instance of a non-amortizing type that supports $j\le m_i$ identities has
		instance cost $K_i\ge m_iK_{\min}\ge jK_{\min}$, and any instance costs at
		least $K_{\min}$. If a combination contains an instance of an amortizing
		type, that instance alone costs at least $gK_{\min}$; otherwise each
		instance costs at least $K_{\min}$ times the number of identities it
		supports. In both cases the instance cost is at least $gK_{\min}$, so
		$\Cbshared(g)\ge\Cbind(g)$. A strict reduction at $g$ therefore requires
		$g>K_i/K_{\min}$ for some amortizing type, that is, $g\ge g_i$, and hence
		$g\ge g^*$.
		
		Conversely, let $i$ be an amortizing type attaining $g^*$ and let $t$ be a
		type with $K_t=K_{\min}$. If $g^*\le g\le m_i$, one instance of type $i$
		supports $g$ identities at instance cost $K_i<g^*K_{\min}\le gK_{\min}$. If
		$g>m_i$, one instance of type $i$ together with $g-m_i$ instances of type
		$t$ supports $g$ identities at instance cost
		$K_i+(g-m_i)K_{\min}<m_iK_{\min}+(g-m_i)K_{\min}=gK_{\min}$, because
		$K_i<m_iK_{\min}$. In both cases $gc$ is the same under either arrangement,
		so $\Cbshared(g)<\Cbind(g)$.
	\end{proof}
	
	\subsection*{Proof of Corollary~\ref{cor:safe-cap} (Safe sharing cap)}
	\begin{proof}
		A cap $k$ replaces each $m_i$ by $\min(m_i,k)$ and leaves the prices, and
		hence $K_{\min}$, unchanged. A non-amortizing type remains non-amortizing,
		because $K_i\ge m_iK_{\min}\ge\min(m_i,k)K_{\min}$. An amortizing type has
		$K_i/K_{\min}<m_i$, so under the cap it is amortizing exactly when
		$kK_{\min}>K_i$. No type is amortizing under the cap if and only if
		$k\le\lfloor K_i/K_{\min}\rfloor$ for every amortizing type, that is,
		$k\le g^*-1$; by the existence part of the proposition, this is exactly
		when identity amortization occurs for no number of identities.
	\end{proof}
	
	\subsection*{Proof of Proposition~\ref{prop:error-bound} (Magnitude)}
	\begin{proof}
		The per-identity floor gives $\Cbshared(g)\ge g(\rho^*+c)$, and
		$\Cbind(g)=g(K_{\min}+c)$, which proves the bound. If $m_b\mid g$, then
		$g/m_b$ instances of type $b$ cost exactly $g\rho^*$, so the bound is
		attained. In general $\Cbshared(g)\le K_b\lceil g/m_b\rceil+gc$, so
		$\Cbshared(g)/g\to\rho^*+c$. Finally, $E\le K_{\min}/\rho^*=m_bK_{\min}/K_b
		\le m_b$, since $K_{\min}\le K_b$.
	\end{proof}
	
	\subsection*{Proof of Proposition~\ref{prop:overestimation} (Bound on attack-cost overestimation)}
	\begin{proof}
		Since $\lceil g_\tau/m\rceil\ge g_\tau/m$,
		\[
		\frac{g_\tau(K+c)}{K\lceil g_\tau/m\rceil+g_\tau c}
		\le\frac{g_\tau(K+c)}{Kg_\tau/m+g_\tau c}=\frac{K+c}{K/m+c}.
		\]
		Moreover, $K/m+c\ge(K+c)/m$, because $c\ge c/m$, so
		$(K+c)/(K/m+c)\le m$. Equality in the first inequality holds exactly when
		$\lceil g_\tau/m\rceil=g_\tau/m$, that is, $m\mid g_\tau$; equality in the
		second holds exactly when $(m-1)c=0$, that is, $m=1$ or $c=0$.
	\end{proof}
	
	\subsection*{Proof of Proposition~\ref{prop:interventions} (Defensive interventions)}
	\begin{proof}
		(a) Every combination pays $c$ per identity, so every cost, and hence the
		minimum, rises by $g\delta$; $H$ rises by $\delta$; and
		$dE/dc=(\rho^*-K_{\min})/(\rho^*+c)^2<0$.
		(b) The old optimal combination remains feasible and its cost rises by
		$\delta x_j$, and no combination becomes cheaper, so the minimum rises by at
		most $\delta x_j$. Before the change, $H=K_{\min}+c$. If $j$ is not the
		unique type of price $K_{\min}$, another type still has that price and $H$
		is unchanged. If it is, then $K'>K_j$ and the new honest cost is
		$\min(K_j+\delta,K')+c$, an increase of $\min(\delta,K'-K_j)$.
		(c) Immediate from the definition of $E$.
		(d) For every cap $k\ge1$ one identity still requires a single instance
		and the instance prices are unchanged, so $H$ is unchanged; the rest is the
		safe-cap corollary.
	\end{proof}

	\section*{Acknowledgments}
	This contribution results from the Horizon Europe funded project RETIME - Urban Adaptation and Alert Solutions for a TIMEly (Re) Action, co-funded by the European Union under Grant Agreement No. 101147113.

	\bibliographystyle{IEEEtran}
	\bibliography{references}
	
\end{document}